\newcommand{\loadmatrix}{\ensuremath{A}\xspace}
\newcommand{\sizeX}{\ensuremath{n_1}\xspace}
\newcommand{\sizeY}{\ensuremath{n_2}\xspace}
\newcommand{\sizeOneD}{\ensuremath{n}\xspace}
\newcommand{\nbproc}{\ensuremath{m}\xspace}
\newcommand{\maxload}{\ensuremath{L_{max}}\xspace}
\newcommand{\avgload}{\ensuremath{L_{avg}}\xspace}
\newcommand{\prefixsum}{\ensuremath{\Gamma}\xspace}
\newcommand{\load}{\ensuremath{L}\xspace}
\newcommand{\pxq}{\ensuremath{P\!\times\!Q}\xspace}
\def\naive{na\"{\i}ve\xspace}
\def\naively{na\"{\i}vely\xspace}
\newcommand{\ceil}[1]{\lceil#1\rceil}
\newcommand{\Ceil}[1]{\left\lceil#1\right\rceil}
\newcommand{\Floor}[1]{\left\lfloor#1\right\rfloor}
\newtheorem{lemma}{Lemma}
\newtheorem{theorem}{Theorem}
\begin{document}

\title{Load-Balancing Spatially Located Computations using Rectangular
  Partitions\footnote{This work was supported in parts by the U.S. DOE
    SciDAC Institute Grant DE-FC02-06ER2775; by the U.S. National
    Science Foundation under Grants CNS-0643969, OCI-0904809 and
    OCI-0904802.}}

\author{Erik Saule$^1$, Erdeniz \"{O}. Ba\c{s}$^{1,2}$ and \"Umit
  V. \c{C}ataly\"urek$^{1,3}$\\
  email: {\textit \{esaule,erdeniz,umit\}@bmi.osu.edu}\\
  $^1$Department of Biomedical Informatics. The Ohio State
  University\\
  $^2$Department of Computer Science and Engineering. The Ohio
  State University\\
  $^3$Department of Electrical and Computer Engineering. The
  Ohio State University}

\maketitle

\begin{abstract}
 Distributing spatially located heterogeneous workloads is an
 important problem in parallel scientific computing. We investigate
 the problem of partitioning such workloads (represented as a matrix
 of non-negative integers) into rectangles, such that the load of the
 most loaded rectangle (processor) is minimized. Since finding the
 optimal arbitrary rectangle-based partition is an NP-hard problem, we
 investigate particular classes of solutions: rectilinear, jagged and
 hierarchical. We present a new class of solutions called \nbproc-way
 jagged partitions, propose new optimal algorithms for \nbproc-way
 jagged partitions and hierarchical partitions, propose new heuristic
 algorithms, and provide worst case performance analyses for some
 existing and new heuristics. Moreover, the algorithms are tested in
 simulation on a wide set of instances. Results show that two of the
 algorithms we introduce lead to a much better load balance than the
 state-of-the-art algorithms. We also show how to design a two-phase
 algorithm that reaches different time/quality tradeoff.

 {\bf Keywords:} Load balancing; spatial partitioning; optimal
 algorithms; heuristics; dynamic programming; particle-in-cell;
 mesh-based computation; jagged partitioning; rectilinear
 partitioning; hierarchical partitioning
\end{abstract}

\section{Introduction}
\label{sec:intro}

To achieve good efficiency when using a parallel platform, one must
distribute the computations and the required data to the processors of the parallel
machine. If the computation tasks are independent, their parallel
processing falls in the category of pleasantly parallel tasks. Even in
such cases, when computation time of the tasks are not equal,
obtaining the optimal load balance to achieve optimum execution time becomes computationally hard and
heuristic solutions are used~\cite{Handbookchap3}. Furthermore, most
of the time some dependencies exist between the tasks and some data
must be shared or exchanged frequently, making the problem even more
complicated.

A large class of application see its computations take place in a
geometrical space of typically two or three dimensions. Different
types of applications fall into that class. Particle-in-cell
simulation~\cite{Plimpton03,Karimabadi06} is an implementation of the
classical mean-field approximation of the many-body problem in
physics. Typically, thousands to millions of particles are located in
cells, which are a discretization of a field. The application
iteratively updates the value of the field in a cell, based on the
state of the particles it contains and the value of the neighboring
cells, and then the state of the particles, based on its own state and
the state of the cell it belongs to. Direct volume
rendering~\cite{Kutluca00} is an application that use rendering
algorithm similar to raycasting in a scene of semi-transparent objects
without reflection. For each pixel of the screen, a ray orthogonal to
the screen is cast from the pixel and color information will be
accumulated over the different objects the ray encounters. Each pixel
therefore requires an amount of computation linear to the number of
objects crossed by the ray and neighboring pixels are likely to cross
the same objects. Partial Differential Equation can be computed using
mesh-based computation. For instance~\cite{Horak05} solves heat
equation on a surface by building a regular mesh out of it. The state of
each node of the mesh is iteratively updated depending on the state of
neighboring nodes. For load balancing purpose, \cite{Nicol94}~maps the
mesh to a discretized two dimensional space. An other application can
be found in 3D engines where the state of the world is iteratively
updated and where the updates on each object depends on neighboring
objects (for instance, for collision purpose)~\cite{Abdelkhalek04}. Linear algebra
operations can potentially also benefit from such
techniques~\cite{Vastenhouw05,Pinar97,Ujaldon96}.

In this work, our goal is to balance the load of such applications.  In the
literature, load balancing techniques can be broadly divided into two
categories: geometric and connectivity-based. Geometric methods (such
as~\cite{Berger87,DBLP:journals/tpds/PilkingtonB96}) leverages the
fact that computations which are close by in the space are more likely
to share data than computations that are far in the space, by dividing
the load using geometric properties of the workload.  Methods from
that class often rely on a recursive decomposition of the domain such
as octrees~\cite{Flaherty97adaptivelocal} or they rely on space
filling curves and surfaces~\cite{Aftosmis04,Aluru97paralleldomain}.
Connectivity-based methods usually model the load balancing problem
through a graph or an hypergraph weighted with computation volumes on
the nodes and communication volumes on the edges or hyper edges (see
for instance~\cite{schloegel00sc,Catalyurek09JPDC}).
Connectivity-based techniques lead to good partitions but are usually
computationally expensive and require to build an accurate graph (or
hypergraph) model of the computation.  They are particularly
well-suited when the interactions between tasks are irregular. Graphs
are useful when modeling interactions that are exactly between two
tasks, and hypergraph are useful when modeling more complex
interactions that could involve more than two
tasks~\cite{Catalyurek99,Hendrickson00}.

When the interactions are regular (structured) one can use methods
that takes the structure into account. For example, when coordinate
information for tasks are available, one can use geometric methods
which leads to ``fast'' and effective partitioning techniques.  In
geometric partitioning, one prefers to partition the problem into
connex and compact parts so as to minimize communication volumes.
Rectangles (and rectangular volumes) are the most preferred shape
because they implicitly minimize communication, do not restrict the
set of possible allocations drastically, are easily expressed and
allow to quickly find which rectangle a coordinate belongs to using
simple data structures. Hence, in this work, we will only focus
partitioning into rectangles.

In more concrete terms, this paper addresses the problem of
partitioning a two-dimensional load matrix composed of non-negative
numbers into a given number of rectangles (processors) so as to minimize the load
of the most loaded rectangle; the most loaded rectangle is the one
whose sum of the element it contains is maximal. The problem is
formulated so that each element of the array represents a task and
each rectangle represents a processor. Computing the optimal solution
for this problem has been shown to be
NP-Hard~\cite{Grigni96}. Therefore, we focus on algorithms with low
polynomial complexity that lead to good solutions.

The approach we are pursuing in this work is to consider different
classes of rectangular partitioning. Simpler structures are expected
to yield bad load balance but to be computed quickly while more
complex structures are expected to give good load balance but lead to
higher computation time. For each class, we look for optimal
algorithms and heuristics. Several algorithms to deal with this
particular problem which have been proposed in the literature are 
described and analyzed. One original class of solution is
proposed and original algorithms are presented and analyzed.

The theoretical analysis of the algorithms is accompanied by an
extensive experimentation evaluation of the algorithms to decide which
one should be used in practice. The experimentation is composed of
various randomly generated datasets and two datasets extracted from two
applications, one following the particle-in-cell paradigm and one
following the mesh-based computation paradigm.

The contributions of this work are as follows:
\begin{itemize}
\item A classical \pxq-way jagged heuristic is theoretically analyzed
  by bounding the load imbalance it generates in the worst case.
\item We propose a new class of solutions, namely, \nbproc-way jagged
  partitions, for which we propose a fast heuristic as well as an
  exact polynomial dynamic programming formulation. This heuristic is
  also theoretically analyzed and shown to perform better than the
  \pxq-way jagged heuristic.
\item For an existing class of solutions, namely, hierarchical
  bipartitions, we propose both an optimal polynomial dynamic
  programming algorithm as well as a new heuristic.
\item The presented and proposed algorithms are practically assessed
  in simulations performed on synthetic load matrices and on real load
  matrices extracted from both a particle-in-cell simulator and a
  geometric mesh. Simulations show that two of the proposed heuristics
  outperform all the tested existing algorithms.
\item Algorithmic engineering techniques are used to create hybrid
  partitioning scheme that provides slower algorithms but with
  higher quality.
\end{itemize}

This work extends~\cite{Saule11-IPDPS} by providing the following main
contributions: tighter bounds in the theoretical guarantee of an
\nbproc-way jagged partitioning heuristic, new heuristics for
\nbproc-way jagged partitioning, experimental results of proposed
algorithms with detailed charts, and hybrid algorithms.

Several previous work tackles a similar problem but they usually
presents only algorithms from one class with no experimental
validation or a very simple one. These works are referenced in the
text when describing the algorithm they introduce. Kutluca et
al.~\cite{Kutluca00} is the closest related work. They are
tackling the parallelization of a Direct Volume 
Rendering application whose load balancing is done using a very
similar model. They survey rectangle based partition but also more
general partition generated from hypergraph modeling and space
filling curves. The experimental validation they propose is based on the
actual runtime of the Direct Volume Rendering application.

Similar classes of solutions are used in the problem of partitioning an
equally loaded tasks onto heterogeneous processors (see
\cite{LD09chap3} for a survey). This is a very different problem which often
assumes the task space is continuous (therefore infinitely
divisible). Since the load balance is trivial to optimize in such a
context, most work in this area focus on optimizing communication
patterns.

The rest of the paper is organized as follows. Section~\ref{sec:model}
presents the model and notations used. The different classes of
partitions are described in Section~\ref{sec:algo}. This section also
presents known and new polynomial time algorithms either optimal or
heuristic. The algorithms are evaluated in Section~\ref{sec:expe} on
synthetic dataset as well as on dataset extracted from two real
simulation codes. Section~\ref{sec:hybrid} presents a two-phase
technique, namely hybrid algorithms, to generate
partitions. Conclusive remarks are given in Section~\ref{sec:ccl}.

\section{Model and Preliminaries}
\label{sec:model}

\subsection{Problem Definition}

Let \loadmatrix be a two dimensional array of  $\sizeX \times
\sizeY$ non-negative integers representing the spatially located load.
This load matrix needs to be distributed on \nbproc processors. Each
element of the array must be allocated to exactly one processor. The
load of a processor is the sum of the elements of the array it has
been allocated. The cost of a solution is the load of the most loaded
processor. The problem is to find a solution that minimizes the cost.

In this paper we are only interested in rectangular allocations, and
we will use 'rectangle' and 'processor' interchangeably. That
is to say, a solution is a set $R$ of \nbproc rectangles $r_i = (x_1,
x_2, y_1, y_2)$ which form a partition of the elements of the array.
Two properties have to be ensured for a solution to be valid:
$\bigcap_{r \in R} = \emptyset$ and $\bigcup_{r \in R} = A$. The first
one can be checked by verifying that no rectangle collides with
another one, it can be done using line to line tests and inclusion
test. The second one can be checked by verifying that all the
rectangles are inside $A$ and that the sum of their area is equal to
the area of $A$. This testing method runs in $O(\nbproc^2)$. The load
of a processor is $\load (r_i) = \sum_{x_1 \leq x \leq x_2} \sum_{y_1
  \leq y \leq y_2} \loadmatrix [x][y]$. The load of the most loaded
processor in solution $R$ is $\maxload = \max_{r_i} \load (r_i)$. We
will denote by $\maxload^*$ the minimal cost
achievable. Notice that $\maxload^* \geq \frac{ \sum_{x,y} \loadmatrix
  [x][y] }{\nbproc}$ and $\maxload^* \geq \max_{x,y} \loadmatrix
[x][y]$ are lower bounds of the optimal maximum load. In term of
distributed computing, it is important to remark that this model is
only concerned by computation times and not by communication
times.

Algorithms that tackle this problem rarely consider the load
of a single element of the matrix. Instead, they  usually consider
 the load of a rectangle. Therefore, we assume that matrix
$\loadmatrix$ is given as a 2D prefix sum array $\prefixsum$ so that
$\prefixsum [x][y] = \sum_{x' \leq x, y'\leq y} \loadmatrix
[x'][y']$. That way, the load of a rectangle $r = (x_1, x_2, y_1,
y_2)$ can be computed in $O(1)$ (instead of $O((x_2-x_1)
(y_2-y_1))$), as 
$\load (r) = \prefixsum [x_2][y_2]
- \prefixsum [x_1-1][y_2] - \prefixsum [x_2][y_1-1] + \prefixsum
[x_1-1][y_1-1]$. 

An algorithm $H$ is said to be a $\rho$-approximation algorithm, if for
all instances of the problem, it returns a solution which maximum load
is no more than $\rho$ times the optimal maximum load, i.e., $\maxload(H) \leq
\rho \maxload^*$. In simulations, the metric used for qualifying the
solution is the load imbalance which is computed as
$\frac{\maxload}{\avgload} -1$ where $\avgload = \frac{\sum_{x,y}
  \loadmatrix [x][y]}{\nbproc}$. A solution which is perfectly
balanced achieves a load imbalance of 0. Notice that the optimal
solution for the maximum load might not be perfectly balanced and
usually has a strictly positive load imbalance. The ratio of most
approximation algorithm are proved using \avgload as the only lower
bound on the optimal maximum load. Therefore, it usually means that a
$\rho$-approximation algorithm leads to a solution whose load
imbalance is less than $\rho-1$.

\subsection{The One Dimensional Variant}

Solving the 2D partitioning problem is obviously harder than solving
the 1D partitioning problem. Most of the algorithms for the 2D
partitioning problems are inspired by 1D partitioning algorithms.  An
extensive theoretical and experimental comparison of those 1D algorithms
has been given in~\cite{Pinar04}. In~\cite{Pinar04}, the fastest
optimal 1D partitioning algorithm is {\tt NicolPlus}; it is an
algorithmically engineered modification of \cite{Nicol94}, which uses
a subroutine proposed in~\cite{Han92}. A slower optimal algorithm
using dynamic programming was proposed in~\cite{Manne95}. Different
heuristics have also been
developed~\cite{Miguet97,Pinar04}. Frederickson~\cite{Frederickson1990}
proposed an $O(\sizeOneD)$ optimal algorithm which is only arguably
better than $O((\nbproc \log \frac{ \sizeOneD}{ \nbproc})^2)$ obtained
by {\tt NicolPlus}. Moreover, Frederickson's algorithm requires
complicated data structures which are difficult to implement and are
likely to run slowly in practice. Therefore, in the remainder of
the paper {\tt NicolPlus} is the algorithm used for solving one
dimensional partitioning problems.

In the one dimensional case, the problem is to partition the array
\loadmatrix composed of \sizeOneD positive integers into \nbproc
intervals. 

{\tt DirectCut (DC)} (called "Heuristic 1" in~\cite{Miguet97}) is the
fastest reasonable heuristic. It greedily
allocates to each processor the smallest interval $I = \{0, \dots
,i\}$ which load is more than $\frac{\sum_i \loadmatrix[i]}{\nbproc}$. This can be
done in $O(\nbproc \log \frac{ \sizeOneD }{ \nbproc })$ using binary
search on the prefix sum array and the slicing technique
of~\cite{Han92}. By construction, {\tt DC} is a 2-approximation
algorithm but more precisely, $\maxload(DC) \leq \frac{\sum_i
  \loadmatrix [i]}{\nbproc} + \max_{i} \loadmatrix [i]$. This result
is particularly important since it provides an upper bound on the
optimal maximum load: $\maxload^* \leq \frac{\sum_i \loadmatrix
  [i]}{\nbproc} + \max_{i} \loadmatrix [i]$.

A widely known heuristic is {\tt Recursive Bisection (RB)} which
recursively splits the array into two parts of similar load and
allocates half the processors to each part. This algorithm leads to a
solution such that $\maxload(RB) \leq \frac{\sum_i \loadmatrix
  [i]}{\nbproc} + \max_{i} \loadmatrix [i]$ and therefore is a
2-approximation algorithm~\cite{Pinar04}. It has a runtime complexity
of $O(\nbproc \log \sizeOneD )$.

The optimal solution can be computed using dynamic
programming~\cite{Manne95}. The formulation comes from the property of
the problem that one interval must finish at index \sizeOneD. Then,
the maximum load is either given by this interval or by the maximum
load of the previous intervals. In other words, $\maxload^*(\sizeOneD,\nbproc)
= \min_{0 \leq k < \sizeOneD} \max \{ \maxload^* (k, \nbproc-1), \load (\{k+1,
\dots, \sizeOneD\})\}$. A detailed analysis shows that this formulation
leads to an algorithm of complexity $O(\nbproc (\sizeOneD - \nbproc)
)$.

The optimal algorithm in~\cite{Nicol94} relies on the parametric
search algorithm proposed in~\cite{Han92}. A function called
\textit{Probe} is given a targeted maximum load and either returns a
partition that reaches this maximum load or declares it
unreachable. The algorithm greedily allocates to each processor the
tasks and stops when the load of the processor will exceed the
targeted value. The last task allocated to a processor can be found
in $O(\log \sizeOneD)$ using a binary search on the prefix sum array,
leading to an algorithm of complexity $O(\nbproc \log \sizeOneD
)$. \cite{Han92} remarked that there are \nbproc binary searches which
look for increasing values in the array. Therefore, by slicing the
array in \nbproc parts, one binary search can be performed in $O(\log
\frac{ \sizeOneD }{\nbproc})$. It remains to decide in which part to
search for. Since there are \nbproc parts and the searched values are
increasing, it can be done in an amortized $O(1)$. This leads to a
\textit{Probe} function of complexity $O(\nbproc \log \frac{ \sizeOneD
}{\nbproc})$.

The algorithm proposed by~\cite{Nicol94} exploits the property that if
the maximum load is given by the first interval then its load is given
by the smallest interval so that $Probe(\load (\{0, \dots, i\}))$ is
true. Otherwise, the largest interval so that $Probe(\load (\{0,
\dots, i\}))$ is false can safely be allocated to the first
interval. Such an interval can be efficiently found using a binary
search, and the array slicing technique of~\cite{Han92} can be used to
reach a complexity of $O((\nbproc \log \frac{ \sizeOneD}{ \nbproc})^2
)$. Recent work~\cite{Pinar04} showed that clever
bounding techniques can be applied to reduce the range of the various
binary searches inside \textit{Probe} and inside the main function
leading to a runtime improvement of several orders of magnitude.

\section{Algorithms}
\label{sec:algo}

This section describes algorithms that can be used to solve the 2D
partitioning problem. These algorithms focus on generating a partition
with a given structure. Samples of the considered structures are presented in
Figure~\ref{fig:partitions}. Each structure is a
generalization of the previous one.

\begin{figure}
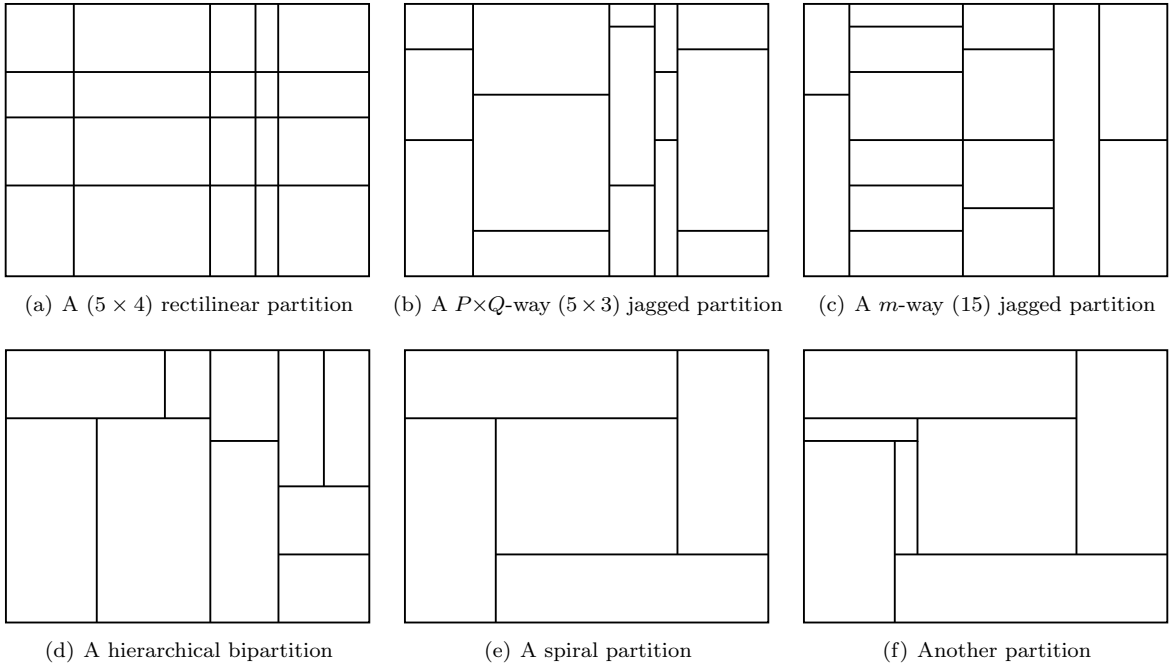

  \centering
  \subfigure[A ($5 \times 4$) rectilinear partition]{
    \includegraphics[width=.3\linewidth]{fig/rect_part.fig} \label{fig:rect_part}}
  \subfigure[A \pxq-way ($5 \times 3$) jagged partition]{
    \includegraphics[width=.3\linewidth]{fig/jagged_part_pxq.fig} \label{fig:jagged_part_pxq}}
  \subfigure[A \nbproc-way ($15$) jagged partition]{
    \includegraphics[width=.3\linewidth]{fig/jagged_part_mway.fig} \label{fig:jagged_part_mway}}
  \subfigure[A hierarchical bipartition]{
    \includegraphics[width=.3\linewidth]{fig/hierar_part.fig} \label{fig:hierar_part}}
  \subfigure[A spiral partition]{
    \includegraphics[width=.3\linewidth]{fig/spiral_part.fig} \label{fig:rec_part1}}
  \subfigure[Another partition]{
    \includegraphics[width=.3\linewidth]{fig/spiral_ext_part.fig} \label{fig:rec_part2}}
  \caption{Different structures of partitions.}
  \label{fig:partitions}
\end{figure}

Table~\ref{tab:all_algo} summarizes the different algorithms discussed in
this paper. Their worst-case complexity and theoretical guarantees are
given.

\begin{sidewaystable}
    \begin{tabular}{|l|c|c|l|}
      \hline Name & Parameters & Worst Case Complexity & Theoretical Guarantee\\\hline\hline
      {\tt RECT-UNIFORM} & P,Q & $O(PQ)$ & - \\\hline
      {\tt RECT-NICOL}\cite{Nicol94} & P,Q & $O(\sizeX \sizeY (Q(P\log{\frac{\sizeX}{P}})^2 + P(Q\log{\frac{\sizeY}{Q}})^2))$ & Better than {\tt RECT-UNIFORM}.\\\hline
      {\tt JAG-PQ-HEUR}\cite{Ujaldon96,Pinar97} & P,Q & $O((P \log \frac{\sizeX}{P})^2 + P (Q \log \frac{\sizeY}{Q})^2 )$ & [$\star$]$LI \leq (1+\Delta \frac{P}{\sizeX})(1+\Delta \frac{Q}{\sizeY}) -1$\\\hline
      {\tt JAG-PQ-OPT}\cite{Manne96,Pinar97} & P,Q & $O((P Q \log
\frac{\sizeX}{P} \log \frac{\sizeY}{Q})^2 )$ & Optimal for \pxq-way
jagged partitioning. \\&&or $O(\sizeX \log \sizeX (P + ( Q \log \frac{\sizeY}{Q})^2))$ &Better than all of the above.\\\hline
      {\tt JAG-M-HEUR}[$\star$] & P & $O((P \log \frac{\sizeX}{P})^2 + (\nbproc \log \frac{\sizeY}{\nbproc})^2 )$ & $LI \leq (\frac{\nbproc}{\nbproc-P} + \frac{\nbproc \Delta}{P \sizeY} + \frac{\Delta^2 \nbproc}{\sizeX\sizeY}) - 1$\\\hline
      {\tt JAG-M-PROBE}[$\star$] & P dividers & $O((\nbproc \log \frac{\sizeY P}{\nbproc})^2)$ & Optimal for the parameters.\\\hline
      {\tt JAG-M-ALLOC}[$\star$] & P, $Q_s$ & $O((P \log \frac{\sizeX}{P} \max_S Q_S \log \frac{\sizeY}{Q_S})^2)$ & Optimal for the parameters.\\\hline
      {\tt JAG-M-HEUR-PROBE}[$\star$] & P & $O((P \log \frac{\sizeX}{P})^2 +(\nbproc \log \frac{\sizeY P}{\nbproc})^2)$ & Better than {\tt JAG-M-HEUR}.\\\hline
      {\tt JAG-M-OPT}[$\star$] & - & $O(\sizeX^2 \nbproc^3 (\log \frac{ \sizeY}{ \nbproc })^2)$ & Optimal for \nbproc-way jagged partitioning.\\&&& Better than all of the above.\\\hline
      {\tt HIER-RB}\cite{Berger87} & - & $O(\nbproc \log \max (\sizeX, \sizeY))$ & -\\\hline
      {\tt HIER-RELAXED}[$\star$] & - & $O(\nbproc^2 \log\max (\sizeX,\sizeY) )$ & -\\\hline
      {\tt HIER-OPT}[$\star$] & - & $O(\sizeX^2 \sizeY^2 \nbproc^2 \log \max (\sizeX,\sizeY) )$ & Optimal for hierarchical bisection.\\&&&Better than all of the above.\\\hline
    \end{tabular}
 \caption{Summary of the presented algorithms. Algorithms and results introduced in this paper are denoted
   by a [$\star$]. $LI$ stands for Load Imbalance. $\Delta =
   \frac{\max_{i,j} \loadmatrix [i][j]}{\min_{i,j} \loadmatrix
     [i][j]}$, if $\forall i,j, \loadmatrix [i][j]>0$.}
 \label{tab:all_algo}
\end{sidewaystable}

\subsection{Rectilinear Partitions}

Rectilinear partitions (also called General Block Distribution
in~\cite{Aspvall01,Manne96}) organize the space according to a \pxq
grid as shown in Figure~\ref{fig:rect_part}. This type of partitions
is often used to optimize communication and indexing and has been
integrated in the High Performance Fortran standard~\cite{HPFF97}. It
is the kind of partition constructed by the MPI function {\tt
  MPI\_Cart}. This function is often implemented using the {\tt
  RECT-UNIFORM} algorithm which divides the first dimension and the
second dimension into $P$ and $Q$ intervals with size
$\frac{\sizeX}{P}$ and$\frac{\sizeY}{Q}$ respectively.  Notice that
{\tt RECT-UNIFORM} returns a \naive partition that balances the area
and not the load.

~\cite{Grigni96} implies that computing the optimal rectilinear partition is an NP-Hard
problem. \cite{Aspvall01}~points out that the
NP-completeness proof in~\cite{Grigni96} also implies that there is no
$(2-\epsilon)$-approximation algorithm unless P=NP. We can also remark
that the proof is valid for given values of $P$ and $Q$, but the complexity of
the problem is unclear if the only constraint is that $P Q \leq
\nbproc$. Notice that, the load matrix is often assumed to be a
square.

\cite{Nicol94}~(and~\cite{Manne96} independently) proposed an
iterative refinement heuristic algorithm that we call {\tt RECT-NICOL}
in the remaining of this paper. Provided the partition in one
dimension, called the fixed dimension, {\tt RECT-NICOL} computes the
optimal partition in the other dimension using an optimal one
dimension partitioning algorithm. The one dimension partitioning
problem is built by setting the load of an interval of the problem as
the maximum of the load of the interval inside each stripe of the
fixed dimension.  At each iteration, the partition of one dimension is
refined. The algorithm runs until the last 2 iterations return the
same partitions. Each iteration runs in
$O(Q(P\log{\frac{\sizeX}{P}})^2)$ or $O(P(Q\log{\frac{\sizeY}{Q}})^2)$
depending on the refined dimension. According to the analysis
in~\cite{Nicol94} the number of iterations is $O(\sizeX \sizeY)$ in
the worst case; however, in practice the convergence is faster (about
3-10 iterations for a 514x514 matrix up to 10,000
processors). \cite{Aspvall01}~shows it is a
$\theta(\sqrt{\nbproc})$-approximation when $P=Q=\sqrt{\nbproc}$.

The first constant approximation algorithm for rectilinear partitions
has been proposed by \cite{Khanna97} but neither the constant nor the
actual complexity is given. \cite{Aspvall01}~claims it is a
$120$-approximation that runs in $O(\sizeX \sizeY)$.

\cite{Aspvall01}~presents two different modifications of {\tt RECT-NICOL} which are
both a $\theta(\sqrt p)$-approximation algorithm for the rectilinear
partitioning problem of a $\sizeX \times \sizeX$ matrix in $p \times
p$ blocks which therefore is a $\theta(\nbproc^{1/4})$-approximation
algorithm. They run in a constant number of iterations (2 and 3) and
have a complexity of $O(\nbproc^{1.5}(\log{n})^2)$ and $O(n(\sqrt{\nbproc}
\log{n})^2)$. \cite{Aspvall01}~claims that despite the approximation ratio
is not constant, it is better in practice than the algorithm proposed
in \cite{Khanna97}.

\cite{Gaur02} provides a 2-approximation algorithm for the rectangle
stabbing problems which translates into a 4-approximation algorithm
for the rectilinear partitioning problem. This method is of high
complexity $O(\log(\sum_{i,j} \loadmatrix[i][j])\sizeX^{10}
\sizeY^{10})$ and heavily relies on linear programming to derive the
result.

\cite{Muthukrishnan05} considers resource augmentation and proposes a
$2$-approximation algorithm with slightly more processors than
allowed. It can be tuned to obtain a $(4+\epsilon)$-approximation
algorithm which runs in $O((n_1 + n_2 + P Q)P \log (n_1 n_2) )$.

\subsection{Jagged Partitions}

Jagged partitions (also called Semi Generalized Block Distribution
in~\cite{Manne96}) distinguish between the main dimension and the
auxiliary dimension. The main dimension will be split in $P$
intervals. Each rectangle of the solution must have its main dimension
matching one of these intervals. The auxiliary dimension of each
rectangle is arbitrary. Examples of jagged partitions are depicted in
Figures~\ref{fig:jagged_part_pxq} and~\ref{fig:jagged_part_mway}. The
layout of jagged partitions also allows to easily locate which
rectangle contains a given element~\cite{Ujaldon96}.

Without loss of generality, all the formulas in this section assume
that the main dimension is the first dimension.

\subsubsection{\pxq-way Jagged Partitions}

Traditionally, jagged partition algorithms are used to generate what
we call \pxq-way jagged partitions in which each interval of
the main dimension will be partitioned in $Q$ rectangles. Such a
partition is presented in Figure~\ref{fig:jagged_part_pxq}.

An intuitive heuristic to generate \pxq-way jagged
partitions, we call {\tt JAG-PQ-HEUR}, is to use a 1D partitioning
algorithm to partition the main dimension and then partition each
interval independently. First, we project the array on the main
dimension by summing all the elements along the auxiliary dimension.
An optimal 1D partitioning algorithm generates the intervals
of the main dimension. Then, for each interval, the elements are
projected on the auxiliary dimension by summing the elements along the
main dimension. An optimal 1D partitioning algorithm is used to
partition each interval. This heuristic have been proposed several
times before, for instance in~\cite{Ujaldon96,Pinar97}.

The algorithm runs in $O((P \log \frac{\sizeX}{P})^2 + P (Q \log
\frac{\sizeY}{Q})^2 )$. Prefix sum arrays avoid redundant projections: the load of interval
$(i,j)$ in the main dimension can be simply computed as $\load(i, j , 1, \sizeY)$.

We now provide an original analysis of the performance of this
heuristic under the hypothesis that all the elements of the load
matrix are strictly positive. First, we provide a refinement on the
upper bound of the optimal maximum load in the 1D partitioning problem
by refining the performance bound of {\tt DC} (and {\tt
  RB}) under this hypothesis.

\begin{lemma}
  \label{lem:DC-approx}
  If there is no zero in the array, applying {\tt DirectCut} on a one dimensional array
  $\loadmatrix$ using \nbproc processors leads to a maximum load having
  the following property: $\maxload(DC) \leq \frac{\sum \loadmatrix
    [i]}{\nbproc} (1+ \Delta \frac{\nbproc}{\sizeOneD})$ where $\Delta
  = \frac{\max_i \loadmatrix [i]}{\min_i \loadmatrix [i]}$.
\end{lemma}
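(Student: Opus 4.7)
The plan is to reduce the claim to the known unconditional bound $\maxload(DC) \leq \frac{\sum_i \loadmatrix [i]}{\nbproc} + \max_i \loadmatrix [i]$ stated earlier in the paper, and then upgrade the additive $\max_i \loadmatrix[i]$ term into the desired multiplicative form using only the positivity assumption and the definition of $\Delta$.

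First, I would start from the already established inequality
\[
\maxload(DC) \leq \frac{\sum_i \loadmatrix [i]}{\nbproc} + \max_i \loadmatrix [i].
\]
The target bound, after distributing, reads
\[
\maxload(DC) \leq \frac{\sum_i \loadmatrix[i]}{\nbproc} + \frac{\sum_i \loadmatrix[i]}{\sizeOneD}\,\Delta,
\]
so it suffices to show $\max_i \loadmatrix[i] \leq \frac{\sum_i \loadmatrix[i]}{\sizeOneD}\,\Delta$.

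Next I would use the hypothesis that every entry is strictly positive. This lets us relate the arithmetic mean to the minimum entry via the elementary inequality $\frac{\sum_i \loadmatrix[i]}{\sizeOneD} \geq \min_i \loadmatrix[i]$, which is valid because the average of any set of real numbers is at least their minimum (and here all entries are defined and positive so $\min_i \loadmatrix[i]$ exists and is strictly positive, keeping $\Delta$ well-defined). Multiplying by $\Delta = \frac{\max_i \loadmatrix[i]}{\min_i \loadmatrix[i]}$ gives
\[
\frac{\sum_i \loadmatrix[i]}{\sizeOneD}\,\Delta \;\geq\; \min_i \loadmatrix[i] \cdot \frac{\max_i \loadmatrix[i]}{\min_i \loadmatrix[i]} \;=\; \max_i \loadmatrix[i].
\]
Substituting this upper bound for $\max_i \loadmatrix[i]$ into the {\tt DC} inequality and factoring out $\frac{\sum_i \loadmatrix[i]}{\nbproc}$ yields exactly the stated bound.

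Honestly, there is no real obstacle here: the lemma is a one-line strengthening of the classical {\tt DC} guarantee under the extra assumption that all entries are positive. The only subtlety worth mentioning in the write-up is justifying that $\min_i \loadmatrix[i] > 0$, so $\Delta$ is well-defined, and noting that the bound on the minimum by the average (and hence on the maximum by $\Delta$ times the average) is what turns the $+ \max_i \loadmatrix[i]$ additive slack into the $\Delta \frac{\nbproc}{\sizeOneD}$ relative slack that will later feed into the two-dimensional {\tt JAG-PQ-HEUR} analysis.
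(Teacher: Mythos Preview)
Your proposal is correct and follows essentially the same approach as the paper: start from the known additive bound $\maxload(DC) \leq \frac{\sum_i \loadmatrix[i]}{\nbproc} + \max_i \loadmatrix[i]$ and rewrite the $\max_i \loadmatrix[i]$ term using $\max_i \loadmatrix[i] \leq \Delta \cdot \frac{\sum_i \loadmatrix[i]}{\sizeOneD}$. The paper compresses this into a single chain of inequalities, while you spell out the intermediate step explicitly, but the argument is identical.
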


\begin{proof}
  The proof is a simple rewriting of the performance bound of {\tt
    DirectCut}: $\maxload (DC)  \leq  \frac{\sum_i \loadmatrix
    [i]}{\nbproc} + \max_{i} \loadmatrix [i] \leq \frac{\sum_i \loadmatrix
    [i]}{\nbproc} (1+ \Delta \frac{\nbproc}{\sizeOneD})$.
\end{proof}

{\tt JAG-PQ-HEUR} is composed of two calls to an optimal one
dimensional algorithm. One can use the performance guarantee of {\tt
  DC} to bound the load imbalance at both steps. This is formally
expressed in the following theorem.

\begin{theorem}
  \label{th:pq-heur-approx}
  If there is no zero in the array, {\tt JAG-PQ-HEUR} is a $(1+\Delta
  \frac{P}{\sizeX})(1+\Delta \frac{Q}{\sizeY})$-approximation
  algorithm where $\Delta = \frac{\max_{i,j} \loadmatrix [i][j]}{\min_{i,j}
    \loadmatrix [i][j]}$, $P < \sizeX$, $Q < \sizeY$.
\end{theorem}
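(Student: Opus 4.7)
The plan is to apply Lemma~\ref{lem:DC-approx} twice, once at each projection step of {\tt JAG-PQ-HEUR}, and then combine the two bounds via the trivial lower bound $\maxload^* \geq \frac{S}{PQ}$, where $S = \sum_{i,j} \loadmatrix[i][j]$.

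First I would analyze the outer cut. After projecting \loadmatrix onto the main dimension, we obtain a 1D array $B$ of length \sizeX with $B[i] = \sum_j \loadmatrix[i][j]$ and total weight $S$. Because every entry of $B$ is a sum of exactly \sizeY entries of \loadmatrix, the ratio $\Delta_1 = \frac{\max_i B[i]}{\min_i B[i]}$ is at most $\Delta$ (the \sizeY factor cancels). Since {\tt JAG-PQ-HEUR} uses an optimal 1D partitioner and the optimum is never worse than what {\tt DirectCut} achieves, Lemma~\ref{lem:DC-approx} implies that every strip has total load at most $\frac{S}{P}\bigl(1+\Delta\frac{P}{\sizeX}\bigr)$.

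Next I would analyze the inner cut. Fix a strip $k$ of width $w_k$ produced in the outer step; let $S_k$ denote its total load, which by the previous step satisfies $S_k \leq \frac{S}{P}\bigl(1+\Delta\frac{P}{\sizeX}\bigr)$. Projecting this strip onto the auxiliary dimension yields a 1D array of length \sizeY whose entries are sums of exactly $w_k$ original entries, so its $\Delta$-ratio is again bounded by $\Delta$. Applying Lemma~\ref{lem:DC-approx} once more (again via optimality of the inner 1D partitioner) gives that the load of every rectangle in strip $k$ is at most $\frac{S_k}{Q}\bigl(1+\Delta\frac{Q}{\sizeY}\bigr)$. Taking the max over strips and chaining the two bounds yields
\[
\maxload(\text{{\tt JAG-PQ-HEUR}}) \;\leq\; \frac{S}{PQ}\left(1+\Delta\frac{P}{\sizeX}\right)\left(1+\Delta\frac{Q}{\sizeY}\right).
\]
Finally I would conclude using $\maxload^* \geq \frac{S}{PQ}$ (the averaging lower bound) to get the claimed approximation ratio.

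The only delicate point, and the one I would double-check carefully, is the preservation of $\Delta$ under projection: one must verify that projecting by summing over a common width $w$ (either \sizeY for the outer step or $w_k$ for an inner step) does not inflate the max-to-min ratio beyond the original $\Delta$, which holds precisely because the $w$-fold sum is taken over the \emph{same} number of entries in every cell of the projected array. Everything else is routine arithmetic: chaining the two one-dimensional bounds and invoking the averaging lower bound on $\maxload^*$.
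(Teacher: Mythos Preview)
Your proposal is correct and follows essentially the same two-step chaining of Lemma~\ref{lem:DC-approx} as the paper, first along the main-dimension projection and then along the auxiliary projection of each strip, concluding with the averaging lower bound. The only difference is that you make explicit the preservation of $\Delta$ under fixed-width projection, a point the paper uses without comment; otherwise the arguments coincide.
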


\begin{proof}
  Let us first give a bound on the load of the most loaded interval
  along the main dimension, i.e., the imbalance after the cut in the
  first dimension. Let $C$ denote the array of the projection of $A$
  among one dimension: $C[i] = \sum_j \loadmatrix[i][j]$. We have:
  $\maxload^*(C) \leq \frac{\sum_i C[i]}{P} (1+ \Delta
  \frac{P}{\sizeX})$. Noticing that $\sum_i C[i] = \sum_{i,j}
  A[i][j]$, we obtain: $\maxload^*(C) \leq \frac{\sum_{i,j}
    \loadmatrix [i][j]}{P} (1+ \Delta \frac{P}{\sizeX})$

  Let $S$ be the array of the projection of $\loadmatrix$ among the
  second dimension inside a given interval $c$ of processors: $S[j] =
  \sum_{i \in c} \loadmatrix [i][j]$. The optimal partition of $S$
  respects: $\maxload^*(S) \leq \frac{\sum_j S[j]}{Q} (1+ \Delta
  \frac{Q}{\sizeY})$. Since $S$ is given by the partition of $C$, we
  have $\sum_j S[j] \leq \maxload^*(C)$ which leads to $\maxload^*(S)
  \leq (1+\Delta \frac{P}{\sizeX})(1+\Delta \frac{Q}{\sizeY})
  \frac{\sum_{i,j} \loadmatrix [i][j]}{PQ}$
\end{proof}

It remains the question of the choice of $P$ and $Q$ which is solved
by the following theorem.

\begin{theorem}
  The approximation ratio of {\tt JAG-PQ-HEUR} is minimized by $P =
  \sqrt{\nbproc\frac{\sizeX}{\sizeY}}$.
\end{theorem}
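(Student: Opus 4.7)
The plan is straightforward calculus: substitute the constraint $PQ = \nbproc$ into the approximation ratio from Theorem~\ref{th:pq-heur-approx} and minimize with respect to $P$.

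First I would set $Q = \nbproc/P$ and write the ratio as
\[
f(P) = \left(1+\Delta \frac{P}{\sizeX}\right)\left(1+\Delta \frac{\nbproc}{P\sizeY}\right)
     = 1 + \Delta \frac{P}{\sizeX} + \Delta \frac{\nbproc}{P\sizeY} + \Delta^2 \frac{\nbproc}{\sizeX\sizeY}.
\]
The first and last terms do not depend on $P$, so minimizing $f$ is equivalent to minimizing $g(P) = \Delta \frac{P}{\sizeX} + \Delta \frac{\nbproc}{P\sizeY}$, which has the form $aP + b/P$ with $a = \Delta/\sizeX > 0$ and $b = \Delta \nbproc/\sizeY > 0$.

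I would finish either by AM--GM (the minimum of $aP + b/P$ over $P>0$ is attained at $P = \sqrt{b/a}$) or equivalently by setting $g'(P) = \Delta/\sizeX - \Delta \nbproc/(P^2 \sizeY) = 0$, which gives $P^2 = \nbproc \sizeX/\sizeY$, i.e.\ $P = \sqrt{\nbproc \sizeX/\sizeY}$. A quick second-derivative check ($g''(P) = 2\Delta\nbproc/(P^3\sizeY) > 0$) confirms this is a minimum, and the corresponding $Q = \nbproc/P = \sqrt{\nbproc \sizeY/\sizeX}$ is consistent.

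There is no real obstacle here. The only caveat worth mentioning is that $P$ must be a positive integer with $P < \sizeX$ and $\nbproc/P < \sizeY$, so the statement should be interpreted as the optimum of the continuous relaxation; the integer-constrained optimum is obtained by rounding to one of the two nearest feasible integer values and picking whichever gives the smaller $f(P)$.
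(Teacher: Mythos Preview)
Your proof is correct and takes essentially the same approach as the paper: substitute $Q=\nbproc/P$, reduce to minimizing a function of the form $(1+aP)(1+b/P)$, and solve via the first-order condition $f'(P)=0$ to obtain $P=\sqrt{b/a}=\sqrt{\nbproc\,\sizeX/\sizeY}$. Your added remarks (AM--GM, second-derivative check, and the integrality caveat) are all sound extras that the paper omits.
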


\begin{proof}
  The approximation ratio of {\tt JAG-PQ-HEUR} can be written as $f(x)
  = (1+ax)(1+b/x)$ with $a,b,x>0$ by setting $a =
  \frac{\Delta}{\sizeX}$, $b = \frac{\Delta \nbproc}{\sizeY}$ and $x =
  P$. The minimum of $f$ is now computed by studying its derivative:
  $f'(x) = a- b/x^2$.  $f'(x) < 0 \iff x < \sqrt{b/a}$ and $f'(x) > 0
  \iff x > \sqrt{b/a}$. It implies that $f$ has one minimum given by
  $f'(x) = 0 \iff x=\sqrt{b/a}$.
\end{proof}

Notice that when $\sizeX=\sizeY$, the approximation ratio is minimized
by $P=Q=\sqrt\nbproc$.\\

Two algorithms exist to find an optimal \pxq-way jagged partition in
polynomial time. The first one, we call {\tt JAG-PQ-OPT-NICOL}, has
been proposed first by~\cite{Pinar97} and is constructed by using the
1D algorithm presented in~\cite{Nicol94}. This algorithm is of
complexity $O((P Q \log \frac{\sizeX}{P} \log \frac{\sizeY}{Q})^2
)$. The second one, we call {\tt JAG-PQ-OPT-DP} is a dynamic
programming algorithm proposed by~\cite{Manne96}. Both algorithms
partition the main dimension using a 1D partitioning algorithm using
an optimal partition of the auxiliary dimension for the evaluation of
the load of an interval. The complexity of {\tt JAG-PQ-OPT-DP} is
$O(\sizeX \log \sizeX (P + ( Q \log \frac{\sizeY}{Q})^2))$.

\subsubsection{\nbproc-way Jagged Partitions}
\label{sec:m-way-jagged}

We introduce the notion of \nbproc-way jagged partitions which allows
jagged partitions with different numbers of processors in each
interval of the main dimension. Indeed, even the optimal partition in
the main dimension may have a high load imbalance and allocating more
processor to one interval might lead to a better load balance. Such a
partition is presented in Figure~\ref{fig:jagged_part_mway}. We
propose four algorithms to generate \nbproc-way jagged partitions. The
first one is {\tt JAG-M-HEUR}, a heuristic extending the \pxq-way
jagged partitioning heuristic. The second algorithm generates the
optimal \nbproc-way jagged partition for given intervals in the main
dimension, leading to {\tt JAG-M-HEUR-PROBE}. Then, the third
algorithm, called {\tt JAG-M-ALLOC}, generates the optimal \nbproc-way jagged
partition for a given number of interval provided the number of
processor inside each interval is known. Finally, we present {\tt
  JAG-M-OPT}, a polynomial optimal dynamic programming algorithm.\\

We propose {\tt JAG-M-HEUR} which is a heuristic similar to {\tt
  JAG-PQ-HEUR}. The main dimension is first partitioned in $P$
intervals using an optimal 1D partitioning algorithm which define $P$
stripes. Then each stripe $S$ is allocated a number of processors
$Q_S$ which is proportional to the load of the interval. Finally, each
interval is partitioned on the auxiliary dimension using $Q_S$
processors by an optimal 1D partitioning algorithm.

Choosing $Q_S$ is a non trivial matter since distributing the
processors proportionally to the load may lead to non integral values
which might be difficult to round. Therefore, we only distribute
proportionally $(\nbproc - P)$ processors which allows to round the
allocation up: $Q_S = \Ceil{(\nbproc - P) \frac{\sum_{i,j \in S}
    \loadmatrix[i][j]}{\sum_{i,j} \loadmatrix [i][j]}}$. Notice that
between $0$ and $P$ processors remain unallocated. They are allocated,
one after the other, to the interval that maximizes $\frac{\sum_{i,j \in
    S}\loadmatrix[i][j]}{Q_S}$. 

An analysis of the performance of {\tt JAG-M-HEUR} similar to the one
proposed for {\tt JAG-PQ-HEUR} that takes the distribution of the
processors into account is now provided.

\begin{theorem}
  \label{th:m-way-guarantee}
  If there is no zero in \loadmatrix, {\tt JAG-M-HEUR} is a
  $(\frac{\nbproc}{\nbproc-P} +
  \frac{\nbproc \Delta}{P \sizeY} + \frac{\Delta^2
    \nbproc}{\sizeX\sizeY})$-approximation algorithm where $\Delta =
  \frac{\max_{i,j} \loadmatrix [i][j]}{\min_{i,j} \loadmatrix
    [i][j]}$, $P < \sizeX$.
\end{theorem}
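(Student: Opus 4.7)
The plan is to bound the maximum load produced by \texttt{JAG-M-HEUR} stripe-by-stripe and identify the three error contributions matching the three terms of the claimed ratio. Let $L_{total} = \sum_{i,j} \loadmatrix[i][j]$, let $S$ range over the $P$ stripes produced by the main-dimension cut, and let $L_S$, $C_S[j] = \sum_{i \in S} \loadmatrix[i][j]$, $Q_S$ denote the stripe load, column-sum in the stripe, and the number of processors assigned to it. Since each stripe is partitioned along the auxiliary dimension by an optimal 1D algorithm, its max load satisfies the \texttt{DC} upper bound used in Lemma~\ref{lem:DC-approx}: $\maxload_S \leq \frac{L_S}{Q_S} + \max_j C_S[j]$.

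The first term is handled by the allocation rule. The ceiling in $Q_S = \Ceil{(\nbproc-P) L_S / L_{total}}$ together with the extra processors distributed afterwards ensures $Q_S \geq (\nbproc-P) L_S / L_{total}$, so $\frac{L_S}{Q_S} \leq \frac{L_{total}}{\nbproc-P} = \avgload \cdot \frac{\nbproc}{\nbproc-P}$, which after dividing by $\avgload$ produces the first term of the ratio.

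The harder part is bounding $\max_j C_S[j]$, and this is where the no-zero assumption is critical. First I would bound the stripe ``width'' $w_S$ (its number of rows along the main dimension) by noting that each cell is at least $\min_{i,j} \loadmatrix[i][j]$, whence $L_S \geq w_S \sizeY \min \loadmatrix$, i.e. $w_S \leq L_S / (\sizeY \min \loadmatrix)$. Combined with $C_S[j] \leq w_S \max \loadmatrix$ this yields $C_S[j] \leq \Delta L_S / \sizeY$. Next, I would bound $L_S$ by applying the \texttt{DC} bound to the main-dimension partition (the optimal 1D algorithm is no worse than \texttt{DC}): $L_S \leq \frac{L_{total}}{P} + \sizeY \max \loadmatrix$. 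Substituting gives $\max_j C_S[j] \leq \frac{\Delta L_{total}}{P\sizeY} + \Delta \max \loadmatrix$.

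Finally, I would bound $\max \loadmatrix$ by observing that the average cell value is $L_{total}/(\sizeX\sizeY) \geq \min \loadmatrix$, so $\max \loadmatrix \leq \Delta \min \loadmatrix \leq \Delta L_{total}/(\sizeX\sizeY)$. Assembling all three contributions gives $\maxload_S \leq \frac{L_{total}}{\nbproc-P} + \frac{\Delta L_{total}}{P\sizeY} + \frac{\Delta^2 L_{total}}{\sizeX\sizeY}$; dividing through by $\avgload$ yields the claimed approximation ratio. The main obstacle is the column-sum step: transferring a one-dimensional per-stripe bound into a global bound involving $\Delta$ requires using the no-zero hypothesis twice — once to bound the stripe width via $\min \loadmatrix$, and once to bound $\max \loadmatrix$ via the average cell value — and it is the double application that produces the quadratic $\Delta^2$ in the last term.
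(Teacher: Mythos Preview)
Your argument is correct and is essentially the paper's proof. The only cosmetic difference is that where the paper invokes Lemma~\ref{lem:DC-approx} as a black box on the stripe (using that the stripe's local $\Delta$ is bounded by the global one) to get $\max_j C_S[j] \le \Delta L_S/\sizeY$, you unroll that inequality explicitly via the stripe-width bound $w_S \le L_S/(\sizeY\min\loadmatrix)$; the resulting chain of inequalities is identical.
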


\begin{proof}
  Let $C$ denote the array of the projection of $A$ among one
  dimension: $C[i] = \sum_j \loadmatrix[i][j]$. Similarly to the
  proof of Theorem~\ref{th:pq-heur-approx}, we have: $\maxload^*(C)
  \leq \frac{\sum \loadmatrix [i][j]}{P} (1+ \Delta \frac{P}{\sizeX})$

  Let $S$ denote the array of the projection of $\loadmatrix$ among
  the second dimension inside a given interval $c$ of an optimal partition of
  $C$. $S[j] = \sum_{i \in c} \loadmatrix [i][j]$. We have $\sum_{j}
  S[j] \leq \maxload^*(C)$. Then, the number of processors allocated
  to the stripe is bounded by: $\frac{(\nbproc - P )
    \sum_j S[j] }{\sum_{i,j} \loadmatrix [i][j]} \leq Q_S \leq
  \frac{(\nbproc - P ) \sum_j S[j] }{\sum_{i,j} \loadmatrix [i][j]} +
  1$. The bound on $\sum_{j} S[j]$ leads to $Q_S \leq \frac{
    \nbproc-P }{P } (1+ \frac{\Delta P}{\sizeX})+1$.

  We now can compute bounds on the optimal partition of stripe
  $S$. The bound from Lemma~\ref{lem:DC-approx} states: $\maxload^*(S)
  \leq \frac{\sum_j S[j] }{Q_S}  (1+ \frac{\Delta
    Q_S}{\sizeY})$. The bounds on $\sum_j S[j]$ and $Q_S$
  imply $\maxload^*(S) \leq \frac{\sum \loadmatrix [i][j]}{\nbproc}
  (\frac{\nbproc}{\nbproc-P} + \frac{\nbproc}{P}\frac{\Delta}{\sizeY}
  + \frac{\Delta^2 \nbproc}{\sizeX\sizeY})$.

  The load imbalance (and therefore the approximation ratio) is less
  than $(\frac{\nbproc}{\nbproc-P} + \frac{\nbproc}{P}
  \frac{\Delta}{\sizeY} + \frac{\Delta^2 \nbproc}{\sizeX\sizeY})$.
\end{proof}

This approximation ratio should be compared to the one obtained by
{\tt JAG-PQ-HEUR} which can be rewritten as $((1+\Delta
\frac{P}{\sizeX})+ \frac{ \Delta \nbproc}{P \sizeY}+\frac{\Delta^2
  \nbproc}{\sizeX \sizeY})$. Basically, using \nbproc-way partitions
trades a factor of $(1+\frac{P \Delta}{\sizeX})$ to the profit of a
factor $\frac{\nbproc}{\nbproc-P}$. 

We can also compute the number of stripes $P$ which optimizes the
approximation ratio of {\tt JAG-M-HEUR}.

\begin{theorem}\label{th:m-way-opt-p}
  The approximation ratio of {\tt JAG-M-HEUR} is minimized by
  $P = \frac{\sqrt{\Delta^2 (\nbproc^2 - 1) - \sizeY} - \Delta
    \nbproc}{\sizeY - \Delta}$.
\end{theorem}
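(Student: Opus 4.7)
The plan is to view the approximation ratio given in Theorem~\ref{th:m-way-guarantee} as a real-valued function of $P$ on the open interval $(0,\nbproc)$ and to locate its minimizer by elementary calculus. The third summand $\frac{\Delta^2 \nbproc}{\sizeX\sizeY}$ is independent of $P$, so minimizing the full ratio reduces to minimizing
$g(P) = \frac{\nbproc}{\nbproc-P} + \frac{\nbproc \Delta}{P\sizeY}$.

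I would first observe that each summand of $g$ is smooth and strictly convex on $(0,\nbproc)$: the first blows up as $P \to \nbproc^-$ and the second as $P \to 0^+$. Hence $g$ is strictly convex, tends to $+\infty$ at both ends of the interval, and therefore has a unique interior minimizer, which is necessarily the unique solution of $g'(P)=0$ in $(0,\nbproc)$. A direct computation gives
$g'(P) = \frac{\nbproc}{(\nbproc-P)^2} - \frac{\nbproc \Delta}{P^2 \sizeY}$;
setting this to zero and cross-multiplying yields $P^2 \sizeY = \Delta(\nbproc-P)^2$. Expanding and collecting terms in $P$ produces the quadratic
$(\sizeY - \Delta)\,P^2 + 2\Delta \nbproc\, P - \Delta \nbproc^2 = 0$,
which I would solve with the quadratic formula, discarding the negative root since $P>0$.

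The main (and essentially only) obstacle is the algebraic bookkeeping: carrying the discriminant through the quadratic formula and simplifying the resulting radical until it matches the closed form written in the statement. A short sanity check is also needed to verify that the chosen root lies in the admissible range, i.e. in $(0,\min(\sizeX,\nbproc))$; this follows automatically from the strict convexity of $g$ and its blow-up at both endpoints, since the unique critical point must lie strictly inside the open interval. Because $g$ is strictly convex, this interior critical point is necessarily the global minimum, so no second-order test is required to conclude the argument.
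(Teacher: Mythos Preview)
Your proposal is correct and follows essentially the same route as the paper: differentiate the ratio from Theorem~\ref{th:m-way-guarantee} with respect to $P$, obtain the quadratic $(\sizeY-\Delta)P^2 + 2\Delta\nbproc P - \Delta\nbproc^2 = 0$, and take its positive root. The only cosmetic difference is that you justify uniqueness of the minimizer via strict convexity, whereas the paper argues directly that $f'(P)$ is negative near $0^+$ and positive for large $P$; both arguments are equivalent here.
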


\begin{proof}
  We analyze the function of the approximation ratio in function of
  the number of stripes: $f(P) = (\frac{\nbproc}{\nbproc-P} +
  \frac{\nbproc}{P} \frac{\Delta}{\sizeY} + \frac{\Delta^2
    \nbproc}{\sizeX\sizeY})$. Its derivative is: $f'(P) =
  \frac{\nbproc}{(\nbproc-P)^2}-\frac{\nbproc \Delta}{\sizeY
    P^2}$. The derivative is negative when $P$ tends to $0^+$,
  positive when $P$ tends to $+\infty$ and null when $(\sizeY -
  \Delta) P^2 + 2 \nbproc \Delta P - \Delta \nbproc^2 = 0$. This
  equation has a unique positive solution: $P = \frac{\sqrt{\Delta^2
      (\nbproc^2 - 1) - \sizeY} - \Delta \nbproc}{\sizeY - \Delta}$.
\end{proof}

This result is fairly interesting. The optimal number of stripes
depends of $\Delta$ and depends of \sizeY but not of \sizeX. The
dependency of $\Delta$ makes the determination of $P$ difficult in
practice since a few extremal values may have a large impact on the
computed $P$ without improving the load balance in
practice. Therefore, {\tt JAG-M-HEUR} will use $\sqrt{\nbproc}$
stripes. The complexity of {\tt JAG-M-HEUR} is $O((P \log
\frac{\sizeX}{P})^2 + \sum_S (Q_S \log \frac{\sizeY}{Q_S})^2 )$ which
in the worst case is $O((P \log
\frac{\sizeX}{P})^2 + (\nbproc \log \frac{\sizeY}{\nbproc})^2 )$
\\

We now explain how one can build the optimal jagged partition provided
the partition in the main dimension is given. This problem reduces to
partitioning $P$ one dimensional arrays using \nbproc processors in
total to minimize \maxload. \cite{Han92}~states that the proposed
algorithms apply in the presence of multiple chains but does not
provide much detail. We explain how to extend {\tt
  NicolPlus}~\cite{Pinar04} to the case of multiple one dimensional
arrays.

We now first explain the algorithm {\tt PROBE-M} for partitioning
multiple arrays that test the feasibility of a given maximum load
\maxload.

The main idea behind {\tt PROBE-M} is to compute for each one
dimensional array how many processors are required to achieve a
maximum load of \maxload. For one array, the number of processors
require to achieve a load of \maxload is obtained by greedily
allocating intervals maximal by inclusion of load less than
\maxload. The boundary of these intervals can be found in $O(\log
\sizeOneD)$ by a binary search. Across all the arrays, there is no
need to compute the boundaries of more than \nbproc intervals, leading
to an algorithm of complexity $O(\nbproc \log \sizeOneD)$.

\cite{Han92} reduces the complexity of the one dimensional partitioning
problem to $O(\nbproc \log \frac{\sizeOneD}{\nbproc})$ by slicing the
array in \nbproc chunks. That way, one has first to determine in which
chunk the borders of the intervals are, and then perform a binary
search in the chunk. Provided there are \nbproc intervals to generate,
the cost of selecting the right chunk is amortized. But it does not
directly apply to the multiple array partitioning problem. Indeed,
slicing the array in such a manner will lead to a complexity of
$O(\nbproc \log \frac{\sizeOneD}{\nbproc} + P \nbproc)$. However,
slicing the arrays in chunk of size $\frac{\sizeOneD P}{\nbproc}$
leads to having at most $\nbproc+P$ chunks. Therefore, {\tt PROBE-M}
has a complexity of $O(\nbproc \log \frac{\sizeOneD P}{\nbproc} +
\nbproc + P) = O(\nbproc \log \frac{\sizeOneD P}{\nbproc})$.

Notice that the engineering presented in~\cite{Pinar04} for the single
array case that use an upper bound and a lower bound on the position
of each boundary can still be used when there are multiple arrays with
{\tt PROBE-M}.  When the values of \maxload decreases, the $i$th cut
inside one array is only going to move toward the beginning of the
array. Conversely, when \maxload increase, the $i$th cut inside one
array is only going to move toward the end of the array. One should
notice that the number of processors allocated to one array might vary
when \maxload varies.

With {\tt PROBE-M}, one can solve the multiple array partitioning
problem in multiple way. An immediate one is to perform a binary
search on the values of \maxload. It is also possible to reuse the
idea of {\tt NicolPlus} which rely on the principle that the first
interval is either maximal such that the load is an infeasible maximum
load or minimal such that the load is a feasible maximum load. The
same idea applies by taking the intervals of each array in the same
order {\tt PROBE-M} considers them. The windowing trick still applies
and leads to an algorithm of complexity $O((\nbproc \log
\frac{\sizeOneD P}{\nbproc})^2)$. Given the stripes in the main
dimension, {\tt JAG-M-PROBE} is the algorithm that applies the
modified version of {\tt NicolPlus} to generate an \nbproc-way
partition.

Other previous algorithms apply to this problem. For instance,
\cite{Bokhari88}~solves the multiple chains problem on host-satellite
systems. One could certainly use this algorithm but the runtime
complexity is $O(\sizeOneD^3 \nbproc \log \sizeOneD)$. Another way to
solve the problem can certainly be derived from the work of
Frederickson~\cite{Frederickson1990}.

{\tt JAG-M-HEUR-PROBE} is the algorithm that uses the stripes obtained
by {\tt JAG-M-HEUR} and then applies {\tt JAG-M-PROBE}.
\\

Given a number of stripes $P$ and the number of processors $Q_S$
inside each stripe, one can compute the optimal \nbproc-way jagged
partition. The technique is similar to the optimal \pxq-way jagged
partitioning technique shown in~\cite{Pinar97}. {\tt NicolPlus} gives
an optimal partition not only on one dimensional array but on any one
dimension structure where the load of intervals are monotonically
increasing by inclusion. When {\tt NicolPlus} needs the load of an
interval, one can return the load of the optimal $Q_S$-way partition
of the auxiliary dimension, computed in $O((Q_S \log
\frac{\sizeY}{Q_S})^2)$.

To generate the \nbproc-way partition, one needs to modify {\tt
  NicolPlus} to keep track of which stripe an interval represents to
return the load of the optimal partition of the auxiliary dimension
with the proper number of processors. This modification is similar to
using {\tt NicolPlus} to solve the heterogeneous array partitioning
problem~\cite{Pinar08}. Let us call this algorithm {\tt JAG-M-ALLOC}.
The overall algorithm has a complexity of $O((P \log \frac{\sizeX}{P}
\max_S Q_S \log \frac{\sizeY}{Q_S})^2)$.
\\

We provide another algorithm, {\tt JAG-M-OPT} which builds an
optimal \nbproc-way jagged partition in polynomial time using
dynamic programming. An optimal solution can be represented by $k$,
the beginning of the last interval on the main dimension, and $x$, the
number of processors allocated to that interval. What remains is a
$(\nbproc-x)$-way partitioning problem of a matrix of size $(k-1)
\times \sizeY$. It is obvious that the interval $\{(k-1), \dots,
\sizeX\}$ can be partitioned independently from the remaining
array. The dynamic programming formulation is:
$$\maxload (\sizeX, \nbproc)  = \min_{1 \leq k \leq \sizeX, 1 \leq x
  \leq \nbproc} \max \{ \maxload(k-1,\nbproc-x), 1D(k,\sizeX,x)\}$$

where $1D(i,j,k)$ denotes the value of the optimal 1D partition
among the auxiliary dimension of the $[i,j]$ interval on $k$ processors.

There are at most $\sizeX \nbproc$ calls to \maxload to evaluate, and
at most $\sizeX^2 \nbproc$ calls to $1D$ to evaluate. Evaluating one
function call of \maxload can be done in $O(\sizeX \nbproc)$ and
evaluating $1D$ can be done in $O((x \log \frac{\sizeY}{x})^2)$ using the
algorithm from~\cite{Nicol94}. The algorithm can trivially be
implemented in $O((\sizeX \nbproc)^2 + \sizeX^2 \nbproc^3 (\log \frac{
  \sizeY }{ \nbproc })^2) = O(\sizeX^2 \nbproc^3 (\log \frac{ \sizeY
}{ \nbproc })^2)$ which is polynomial.

However, this complexity is an upper bound and several improvements can
be made, allowing to gain up to two orders of
magnitude in practice. First of all, the different values of both functions
\maxload and $1D$ can only be computed if needed. Then the parameters
$k$ and $x$ can be found using binary search. For a given $x$,
$\maxload(k-1,\nbproc-x)$ is an increasing function of $k$, and
$1D(k,\sizeX,x)$ is a decreasing function of $k$. Therefore, their
maximum is a bi-monotonic, decreasing first, then increasing function
of $k$, and hence its minimum can be found using a binary search.

Moreover, the function $1D$ is the value of an optimal 1D partition, and we know
lower bounds and an upper bound for this function. Therefore, if
$\maxload(k-1, \nbproc-x) > UB(1D(k,\sizeX,x))$, there is no need to
evaluate function $1D$ accurately since it does not give the
maximum. Similar arguments on lower and upper bound of $\maxload(k-1,
\nbproc-x)$ can be used.

Finally, we are interested in building an optimal \nbproc-way jagged
partition and we use branch-and-bound techniques to speed up the
computation. If we already know a solution to that problem (Initially
given by a heuristic such as {\tt JAG-M-HEUR} or found during the
exploration of the search space), we can use its maximum load $l$ to
decide not to explore some of those functions, if the values (or their
lower bounds) $\maxload$ or $1D$ are larger than $l$.

\subsection{Hierarchical Bipartition}
\label{sec:hierar_part}

Hierarchical bipartitioning techniques consist of obtaining partitions
that can be recursively generated by splitting one of the dimensions in
two intervals. An example of such a partition is depicted in
Figure~\ref{fig:hierar_part}. Notice that such partitions can be
represented by a binary tree for easy indexing. We present first
{\tt HIER-RB}, a known algorithm to generate hierarchical
bipartitions. Then we propose {\tt HIER-OPT}, an original optimal
dynamic programming algorithm. Finally, a heuristic algorithm, called
{\tt HIER-RELAXED} is derived from the dynamic programming
algorithm.\\

A classical algorithm to generate hierarchical bipartition is Recursive
Bisection which has originally been proposed in~\cite{Berger87} and
that we call in the following {\tt HIER-RB}. It cuts the matrix into
two parts of (approximately) equal load and allocates half the
processors to each sub-matrix which are partitioned recursively. The
dimension being cut in two intervals alternates at each level of the
algorithm.  This algorithm can be implemented in $O(\nbproc \log \max
(\sizeX, \sizeY))$ since finding the position of the cut can be done
using a binary search.

The algorithm was originally designed for a number of processors which
is a power of 2 so that the number of processors at each step is
even. However, if at a step the number of processors is odd, one part
will be allocated $\Floor{\frac{\nbproc}{2}}$ processors and the other
part $\Floor{\frac{\nbproc}{2}} + 1$ processors. In such a case, the
cutting point is selected so that the load per processor is minimized.

Variants of the algorithm exist based on the decision of the dimension
to partition. One variant does not alternate the partitioned dimension
at each step but virtually tries both dimensions and selects the one
that lead to the best expected load balance~\cite{Vastenhouw05}. Another
variant decides which direction to cut by selecting the direction with
longer length.\\

We now propose {\tt HIER-OPT}, a polynomial algorithm for generating
the optimal hierarchical partition. It uses dynamic programming and
relies on the tree representation of a solution of the problem. An
optimal hierarchical partition can be represented by the orientation
of the cut, the position of the cut (denoted $x$ or $y$, depending on
the orientation), and the number of processors $j$ in the first part.

The algorithm consists in evaluating the function $\maxload (x_1, x_2,
y_1, y_2 ,\nbproc)$ that partitions rectangle $(x_1, x_2, y_1, y_2 )$ using
$\nbproc$ processors. 
\begin{flalign}
  \maxload(x_1, x_2,  y_1, y_2, &\nbproc)=\min_j \min \big\{ \\
   &\min_x \max \{\maxload(x_1, x, y_1, y_2 ,j), \label{eq:firstdimp1}\\
   &\maxload(x+1, x_2, y_1, y_2 , m-j)\}, \label{eq:firstdimp2}\\
   &\min_y \max \{\maxload(x_1, x_2, y_1, y, j),\label{eq:seqdimp1}\\
   &\maxload(x_1, x_2, y+1, y_2, m-j)\} \big\} \label{eq:seqdimp2} 
\end{flalign}

Equations~\ref{eq:firstdimp1} and~\ref{eq:firstdimp2}
consider the partition in the first dimension and
Equations~\ref{eq:seqdimp1} and ~\ref{eq:seqdimp2} consider it in the
second dimension. The dynamic programming provides the position $x$ (or
$y$) to cut and the number of processors ($j$ and $\nbproc-j$) to
allocate to each part.

This algorithm is polynomial since there are $O(\sizeX^2 \sizeY^2
\nbproc)$ functions $\maxload$ to evaluate and each function can
\naively be evaluated in $O((x_2-x_1+y_2-y1) \nbproc)$. Notice that
optimization techniques similar to the one used in
Section~\ref{sec:m-way-jagged} can be applied. In particular $x$ and
$y$ can be computed using a binary search reducing the complexity of
the algorithm to $O(\sizeX^2 \sizeY^2 \nbproc^2 \log(\max
(\sizeX,\sizeY))) )$.\\

Despite the dynamic programming formulation is polynomial, its
complexity is too high to be useful in practice for real sized
systems. We extract a heuristic called {\tt HIER-RELAXED}. To
partition a rectangle $(x_1,x_2,y_1,y_2)$ on \nbproc processors, {\tt
  HIER-RELAXED} computes the $x$ (or $y$) and $j$ that optimize the
dynamic programming equation, but substitutes the recursive calls to
$\maxload()$ by a heuristic based on the average load: That is to say,
instead of making recursive $\maxload(x, x', y, y' ,j)$ calls, $
\frac{\load(x, x', y, y')}{j}$ will be calculated. The values of $x$
(or $y$) and $j$ provide the position of the cut and the number of
processors to allocate to each part respectively. Each part is
recursively partitioned. The complexity of this algorithm is
$O(\nbproc^2 \log(\max (\sizeX,\sizeY))) )$.

\subsection{More General Partitioning Schemes}

The considerations on Hierarchical Bipartition can be extended to any
kind of recursively defined pattern such as the ones presented in
Figures~\ref{fig:rec_part1} and~\ref{fig:rec_part2}. As long as there
are a polynomial number of possibilities at each level of the
recursion, the optimal partition following this rule can be generated
in polynomial time using a dynamic programming technique. The number
of functions to evaluate will keep being in $O(\sizeX^2 \sizeY^2
\nbproc)$; one function for each sub rectangle and number of
processors.. The only difference will be in the cost of evaluating the
function calls. In most cases if the pattern is composed of $k$
sections, the evaluation will take $O((\max (\sizeX, \sizeY)
\nbproc)^{k-1})$.

This complexity is too high to be of practical use but it proves that
an optimal partition in these classes can be generated in polynomial
time. Moreover, those dynamic programming can serve as a basis to
derive heuristics similarly to {\tt HIER-RELAXED}.

A natural question is ``given a maximum load, is it possible to
compute an arbitrary rectangular partition?'' \cite{Khanna98}~shows
that such a problem is NP-Complete and that there is no approximation
algorithm of ratio better than $\frac{5}{4}$ unless P=NP. Recent
work~\cite{Paluch06} provides a 2-approximation algorithm which
heavily relies on linear programming.

\section{Experimental Evaluation}
\label{sec:expe}

\subsection{Experimental Setting}

This section presents an experimental study of the main
algorithms. For rectilinear partitions, both the uniform partitioning
algorithm {\tt RECT-UNIFORM} and {\tt RECT-NICOL} algorithm have been
implemented. For \pxq-way and $\nbproc$-way jagged partitions, the
following heuristics and optimal algorithms have been implemented:
{\tt JAG-PQ-HEUR}, {\tt JAG-PQ-OPT-NICOL}, {\tt JAG-PQ-OPT-DP}, {\tt
  JAG-M-HEUR}, {\tt JAG-M-HEUR-PROBE} and {\tt JAG-M-OPT}. Each jagged
partitioning algorithm exists in three variants, namely {\tt -HOR}
which considers the first dimension as the main dimension, {\tt -VER}
which considers the second dimension as the main dimension, and {\tt
  -BEST} which tries both and selects the one that leads to the best
load balance. For hierarchical partitions, both recursive bisection
{\tt HIER-RB} and the heuristic {\tt HIER-RELAXED} derived from the
dynamic programming have been implemented. Each hierarchical
bipartition algorithm exists in four variants {\tt -LOAD} which
selects the dimension to partition according to get the best load,
{\tt -DIST} which partitions the longest dimension, and {\tt -HOR} and
{\tt -VER} which alternate the dimension to partition at each level of
the recursion and starting with the first or the second dimension.

The algorithms were tested on the BMI department cluster called
Bucki. Each node of the cluster has two 2.4 GHz AMD Opteron(tm)
quad-core processors and 32GB of main memory. The nodes run on Linux
2.6.18. The sequential algorithms are implemented in C++. The
compiler is g++ 4.1.2 and -O2 optimization was used.

The algorithms are tested on different classes of instances. Some are
synthetic and some are extracted from real applications. The first set of
instances is called PIC-MAG. These instances are extracted from the
execution of a particle-in-cell code which simulates the interaction
of the solar wind on the Earth's magnetosphere~\cite{Karimabadi06}. In
those applications, the computational load of the system is mainly
carried by particles.  We extracted the distribution of the
particles every 500 iterations of the simulations for the first 33,500
iterations. These data are extracted from a 3D simulation. Since the
algorithms are written for the 2D case, in the PIC-MAG instances, the
number of particles are accumulated among one dimension to get a 2D
instance. A PIC-MAG instance at  iteration 20,000 can be seen in
Figure~\ref{fig:homa3d_inst}. The intensity of a pixel is linearly
related to computation load for that pixel (the whiter the more
computation). During the course of the simulation, the
particles move inside the space leading to values of $\Delta$ varying
between $1.21$ and $1.51$.

\begin{figure}
  \centering
  \subfigure[PIC-MAG]{\includegraphics[width=0.3\linewidth]{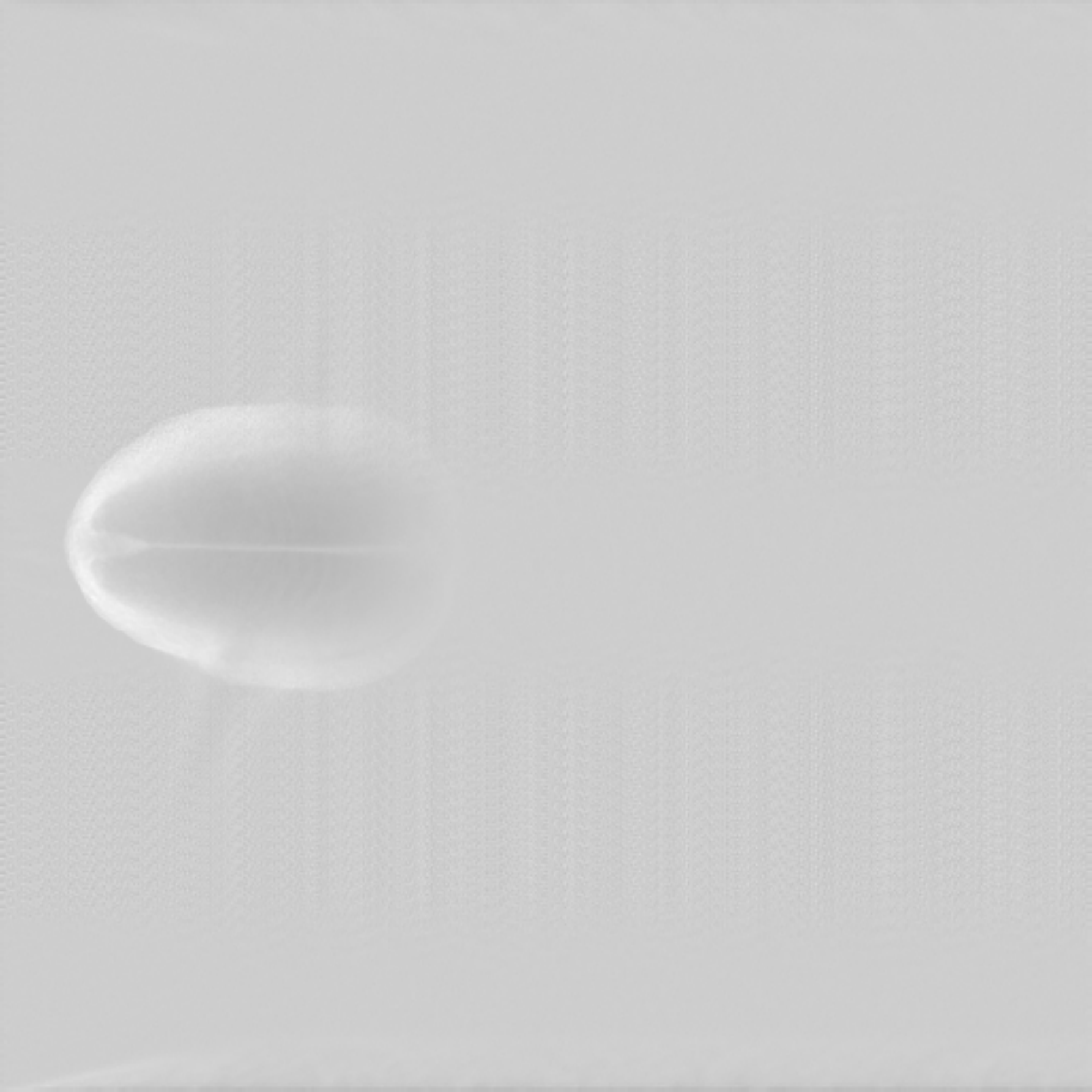} \label{fig:homa3d_inst}}
  \subfigure[SLAC]{\includegraphics[width=0.3\linewidth]{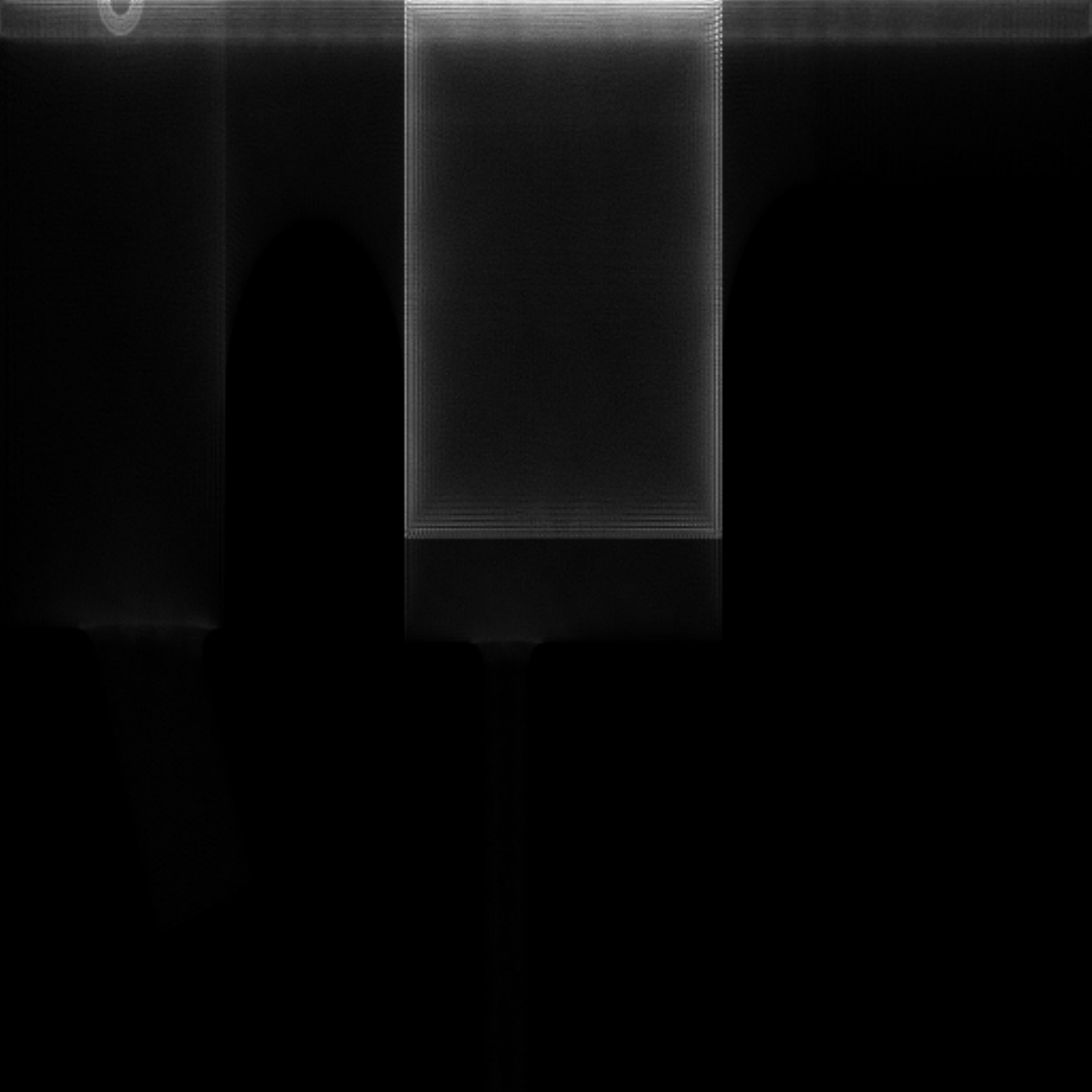}  \label{fig:SLAC_inst}}
  \subfigure[Diagonal]{\includegraphics[width=0.3\linewidth]{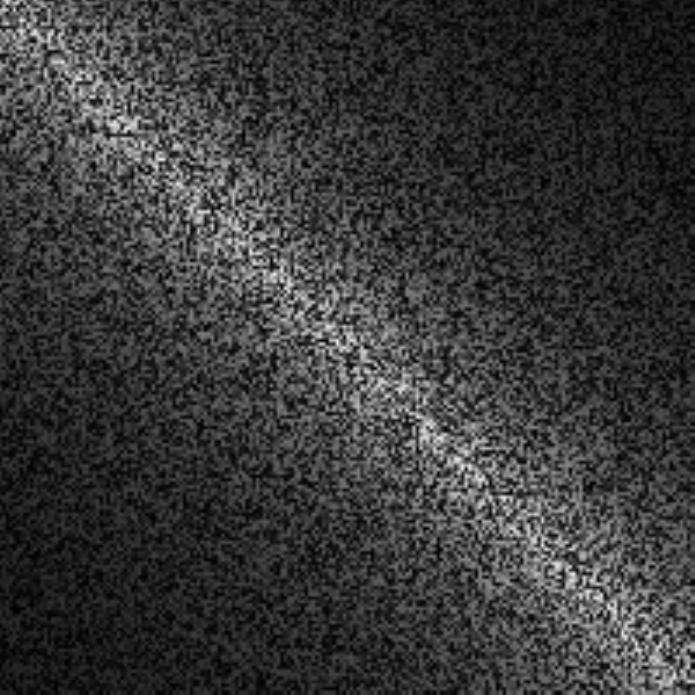} \label{fig:diag_inst}} 
  \subfigure[Peak]{\includegraphics[width=0.3\linewidth]{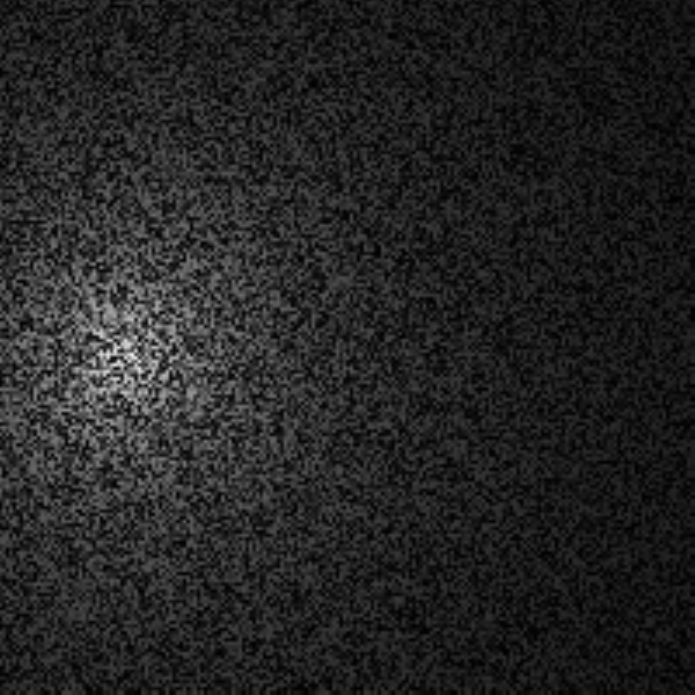} \label{fig:peak_inst}}
  \subfigure[Multi-peak]{\includegraphics[width=0.3\linewidth]{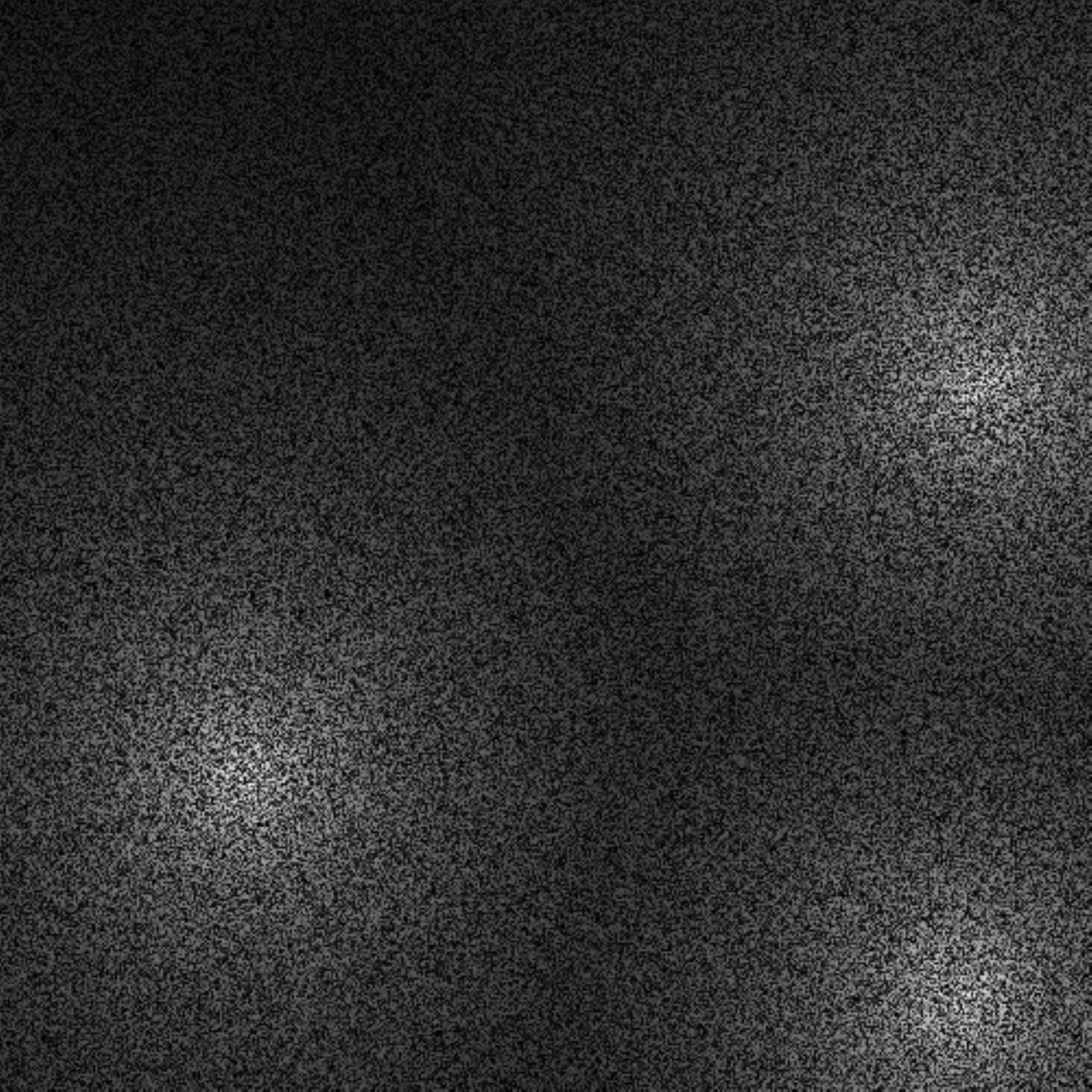} \label{fig:mpeak_inst}}
  \subfigure[Uniform]{\includegraphics[width=0.3\linewidth]{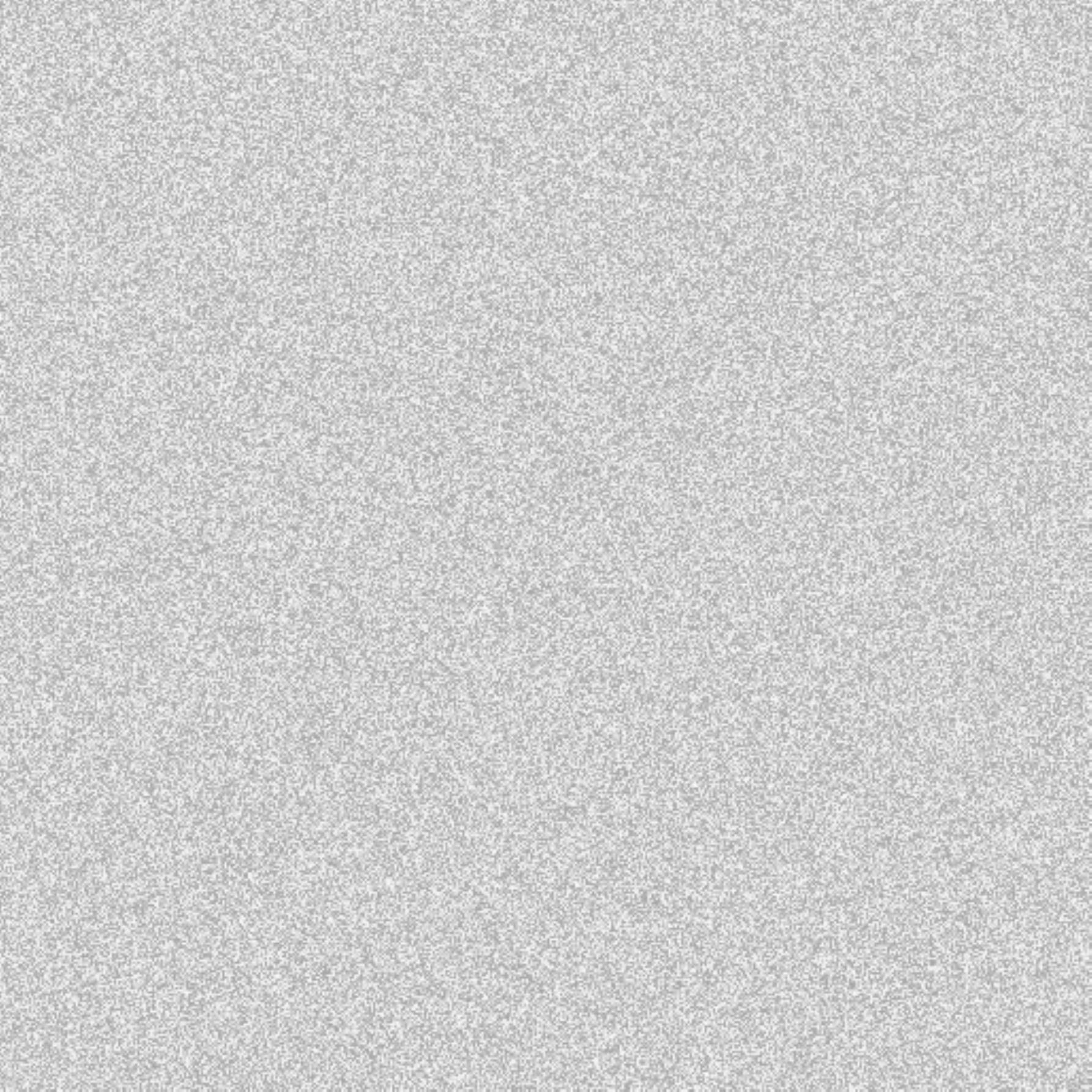} \label{fig:unif_inst}}
  \caption{Examples of real and synthetic instances.}
\end{figure}

The SLAC dataset (depicted in Figure~\ref{fig:SLAC_inst}) is generated from the mesh of a 3D object. Each
vertex of the 3D object carries one unit of computation. Different
instances can be generated by projecting the mesh on a 2D plane and by
changing the granularity of the discretization. This setting match
the experimental setting of~\cite{Nicol94}. In the experiments, we
generated instances of size 512x512. Notice that the matrix contains
zeroes, therefore $\Delta$ is undefined.

Different classes of synthetic squared matrices are also used, these
classes are called diagonal, peak, multi-peak and uniform. Uniform
matrices (Figure~\ref{fig:unif_inst}) are generated to obtain a given
value of $\Delta$: the computation load of each cell is generated
uniformly between $1000$ and $1000*\Delta$. In the other three
classes, the computation load of a cell is given by generating a
number uniformly between 0 and the number of cells in the matrix which
is divided by the Euclidean distance to a reference point (a 0.1
constant is added to avoid dividing by zero). The choice of the
reference point is what makes the difference between the three classes
of instances. In diagonal (Figure~\ref{fig:diag_inst}), the reference
point is the closest point on the diagonal of the matrix. In peak
(Figure~\ref{fig:peak_inst}), the reference point is one point chosen
randomly at the beginning of the execution. In multi-peak
(Figure~\ref{fig:mpeak_inst}), several points (here 3) are randomly
generated and the closest one will be the reference point. Those
classes are inspired from the synthetic data from~\cite{Manne96}.

The performance of the algorithms is given using the load imbalance
metric defined in Section~\ref{sec:model}. For synthetic dataset,
the load imbalance is computed over 10 instances as follows:
$\frac{\sum_{I}{\maxload}(I)}{\sum_{I}{\load_{avg}}(I)} - 1$. The
experiments are run on most square number of processors between 16 and
10,000. Using only square numbers allows us to fix the parameter
$P=\sqrt{\nbproc}$ for all rectilinear and jagged algorithms.

\subsection{Jagged algorithms}

The jagged algorithms have three variants, two depending on whether the
main dimension is the first one or the second one and the third tries
both of them and takes the best solution. On all the fairly homogeneous
instances (i.e., all but the mesh SLAC), the load imbalance of the three
variants are quite close and the orientation of the jagged partitions
does not seem to really matter. However this is not the same in
\nbproc-way jagged algorithms where the selection of the main dimension
can make significant differences on overall load imbalance. Since
the \nbproc-way jagged partitioning heuristics are as fast as heuristic jagged
partitioning, trying both dimensions and taking the one with best load
imbalance is a good option. From now on, if the variant of a jagged
partitioning algorithm is unspecified, we will refer to their {\tt BEST}
variant.

We proposed in Section~\ref{sec:m-way-jagged} a new type of jagged
partitioning scheme, namely, \nbproc-way jagged, which does not
require all the slices of the main dimension to have the same number
of processors. This constraint is artificial in most cases and we show
that it significantly harms the load balance of an application.

\begin{figure}
  \centering
  \includegraphics[width=.8\linewidth]{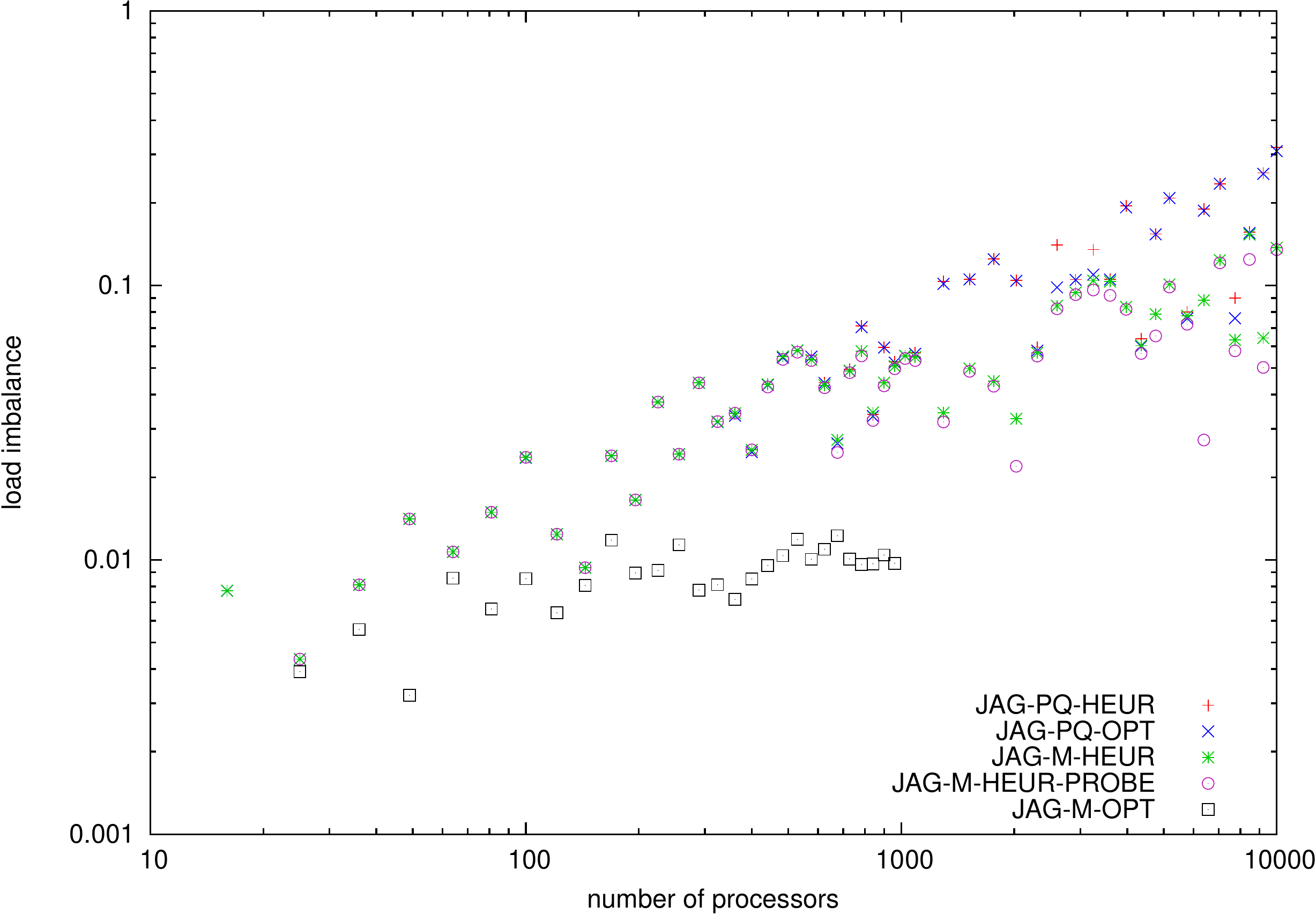}
  \caption{Jagged methods on PIC-MAG iter=30,000.}
  \label{fig:homa3d_it30000_jagged}
\end{figure}

Figure~\ref{fig:homa3d_it30000_jagged} presents the load balance
obtained on PIC-MAG at iteration 30,000 with heuristic and optimal
\pxq-way jagged algorithms and \nbproc-way jagged algorithms. On less
than one thousand processors, {\tt JAG-M-HEUR}, {\tt JAG-PQ-HEUR},
{\tt JAG-PQ-OPT} and {\tt JAG-M-HEUR-PROBE} produce almost the same
results (hence the points on the chart are super imposed). Note that,
{\tt JAG-PQ-HEUR} and {\tt JAG-PQ-OPT} obtain the same load imbalance
most of the time even on more than one thousand processors. This
indicates that there is almost no room for improvement for the
\pxq-way jagged heuristic. {\tt JAG-M-HEUR-PROBE} usually obtains the
same load imbalance that {\tt JAG-M-HEUR} or does slightly better. But
on some cases, it leads to dramatic improvement. One can remark that
the \nbproc-way jagged heuristics always reaches a better load balance
than the \pxq-way jagged partitions.

\begin{figure}
  \centering
  \includegraphics[width=.8\linewidth]{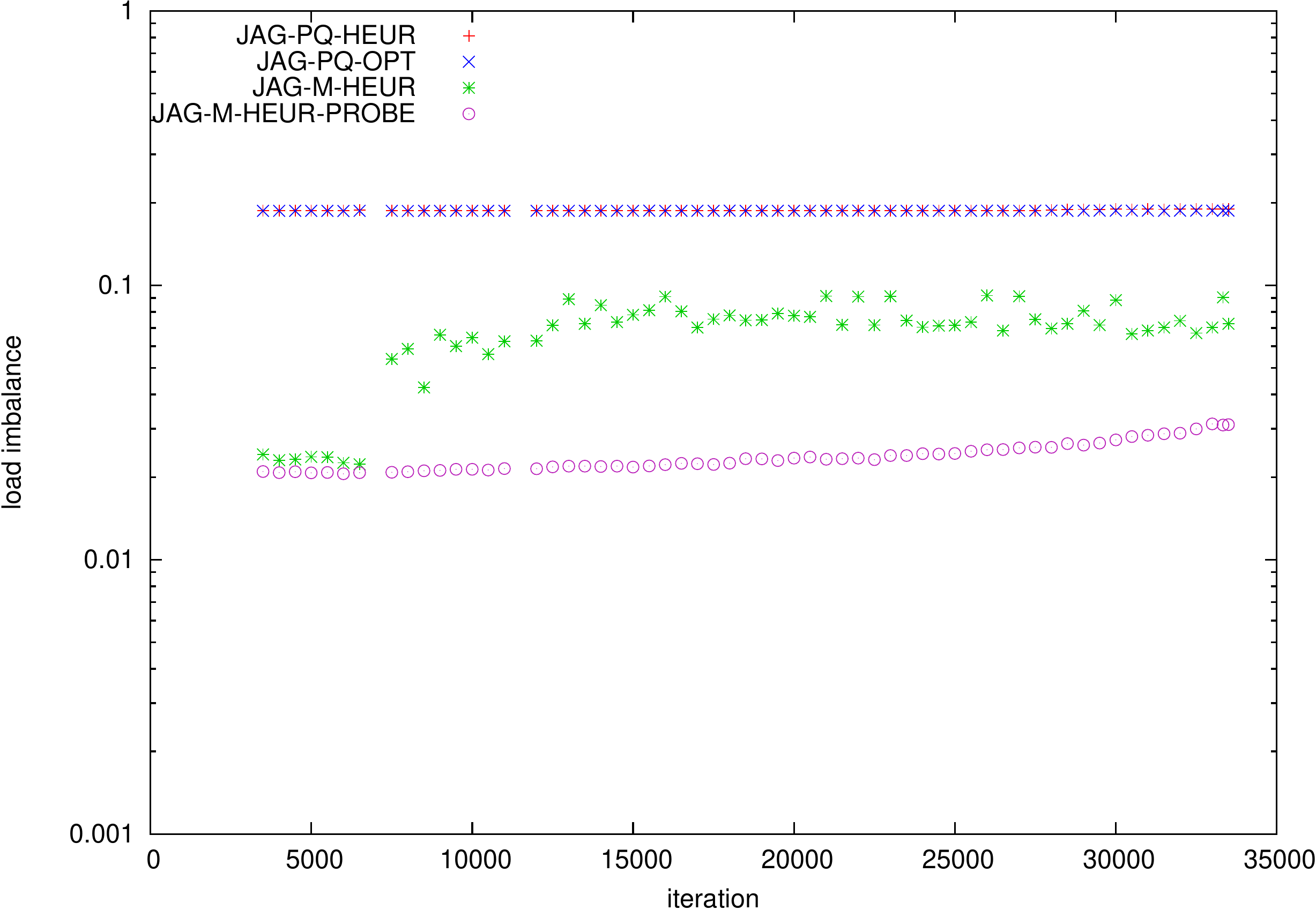}
  \caption{Jagged methods on PIC-MAG with $\nbproc = 6400$.}
  \label{fig:homa3d_proc6400_jagged}
\end{figure}

Figure~\ref{fig:homa3d_proc6400_jagged} presents the load imbalance of
the algorithms with 6,400 processors for the different iterations of
the PIC-MAG application. \pxq-way jagged partitions have a load imbalance of
18\% while the imbalance of the partitions generated by {\tt JAG-M-HEUR}
varies between 2.5\% (at iteration 5,000) and 16\% (at iteration
18,000). {\tt JAG-M-HEUR-PROBE} achieves the best load imbalance of
the heuristics between 2\% and 3\% on all the instances.

In Figure~\ref{fig:homa3d_it30000_jagged}, the optimal \nbproc-way
partition have been computed up to 1,000 processors (on more than
1,000 processors, the runtime of the algorithm becomes
prohibitive). It shows an imbalance of about 1\% at iteration 30,000
of the PIC-MAG application on 1,000 processors. This value is much
smaller than the 6\% imbalance of {\tt JAG-M-HEUR} and {\tt
  JAG-M-HEUR-PROBE}. It indicates that there is room for improvement
for \nbproc-way jagged heuristics. Indeed, the current heuristic uses
$\sqrt{\nbproc}$ parts in the first dimension, while the optimal is
not bounded to that constraint.  Notice that an optimal \nbproc-way
partition with a given number of columns could be computed optimally
using dynamic programming. Figure~\ref{fig:P_study} presents the impact of the
number of stripes on the load imbalance of {\tt JAG-M-HEUR} on a
uniform instance as well as the worst case imbalance of the
\nbproc-way jagged heuristic guaranteed by
Theorem~\ref{th:m-way-guarantee}. It appears clearly that the actual
performance follows the same trend as the worst case performance of
{\tt JAG-M-HEUR}. Therefore, ideally, the number of stripes should be
chosen according to the guarantee of {\tt JAG-M-HEUR}. However, the
parameters of the formula in Theorem~\ref{th:m-way-opt-p} are
difficult to estimate accurately and the variation of the load
imbalance around that value can not be predicted accurately.

\begin{figure}
  \centering
  \includegraphics[width=.8\linewidth]{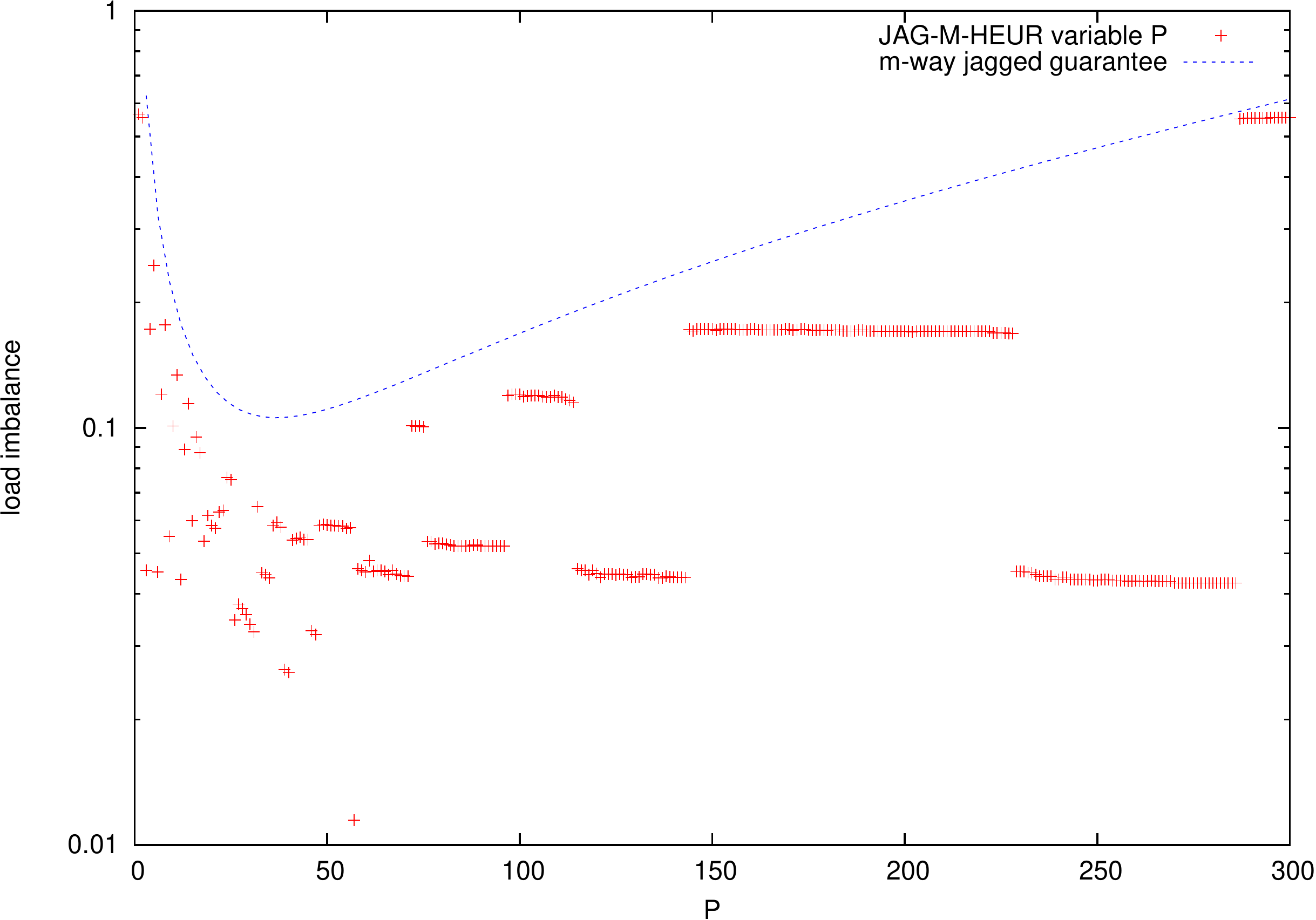}
  \caption{Impact of the number of stripes in {\tt JAG-M-HEUR} on a
    514x514 Uniform instance with $\Delta=1.2$ and $\nbproc = 800$.}
  \label{fig:P_study}
\end{figure}

The load imbalance of {\tt JAG-PQ-HEUR}, {\tt JAG-PQ-OPT}, {\tt
  JAG-M-HEUR} and {\tt JAG-M-HEUR-PROBE} make some waves on
Figure~\ref{fig:homa3d_it30000_jagged} when the number of processors
varies. Those waves are caused by the imbalance of the partitioning in
the main dimension of the jagged partition. Even more, these waves
synchronized with the integral value of $\frac{\sizeX }{
  \sqrt{\nbproc}}$. This behavior is linked to the almost uniformity
of the PIC-MAG dataset. The same phenomena induces the steps in
Figure~\ref{fig:P_study}.

\subsection{Hierarchical Bipartition}

There are four variants of {\tt HIER-RB} depending on the dimension that
will be partitioned in two. In general the load imbalance increases with
the number of processors. The {\tt HIER-RB-LOAD} variant achieves a
slightly smaller load balance than the {\tt HIER-RB-HOR}, {\tt
HIER-RB-VER} and {\tt HIER-RB-DIST} variants. The results are similar on
all the classes of instances and are omitted.

There are also four variants to the {\tt HIER-RELAXED}
algorithm. Figure~\ref{fig:multi_peak_size512_RH} shows the
load imbalance of the four variants when the number of processors
varies on the multi-peak instances of size 512. In general
the load imbalance increases with the number of processors for 
{\tt HIER-RELAXED-LOAD} and {\tt HIER-RELAXED-DIST}. The {\tt
  HIER-RELAXED-LOAD} variant achieves overall the best load
balance. The load imbalance of the {\tt HIER-RELAXED-VER} (and {\tt
  HIER-RELAXED-HOR}) variant improves past 2,000 processors and seems
to converge to the performance of {\tt HIER-RELAXED-LOAD}. The number
of processors where these variants start improving depends on the size
of the load matrix. Before convergence, the obtained load balance is
comparable to the one obtained by {\tt HIER-RELAXED-DIST}. The
diagonal instances with a size of 4,096 presented in
Figure~\ref{fig:diagonal_dist_size4096_recbisectminmax} shows this
behavior.

\begin{figure}
  \centering
  \includegraphics[width=.8\linewidth]{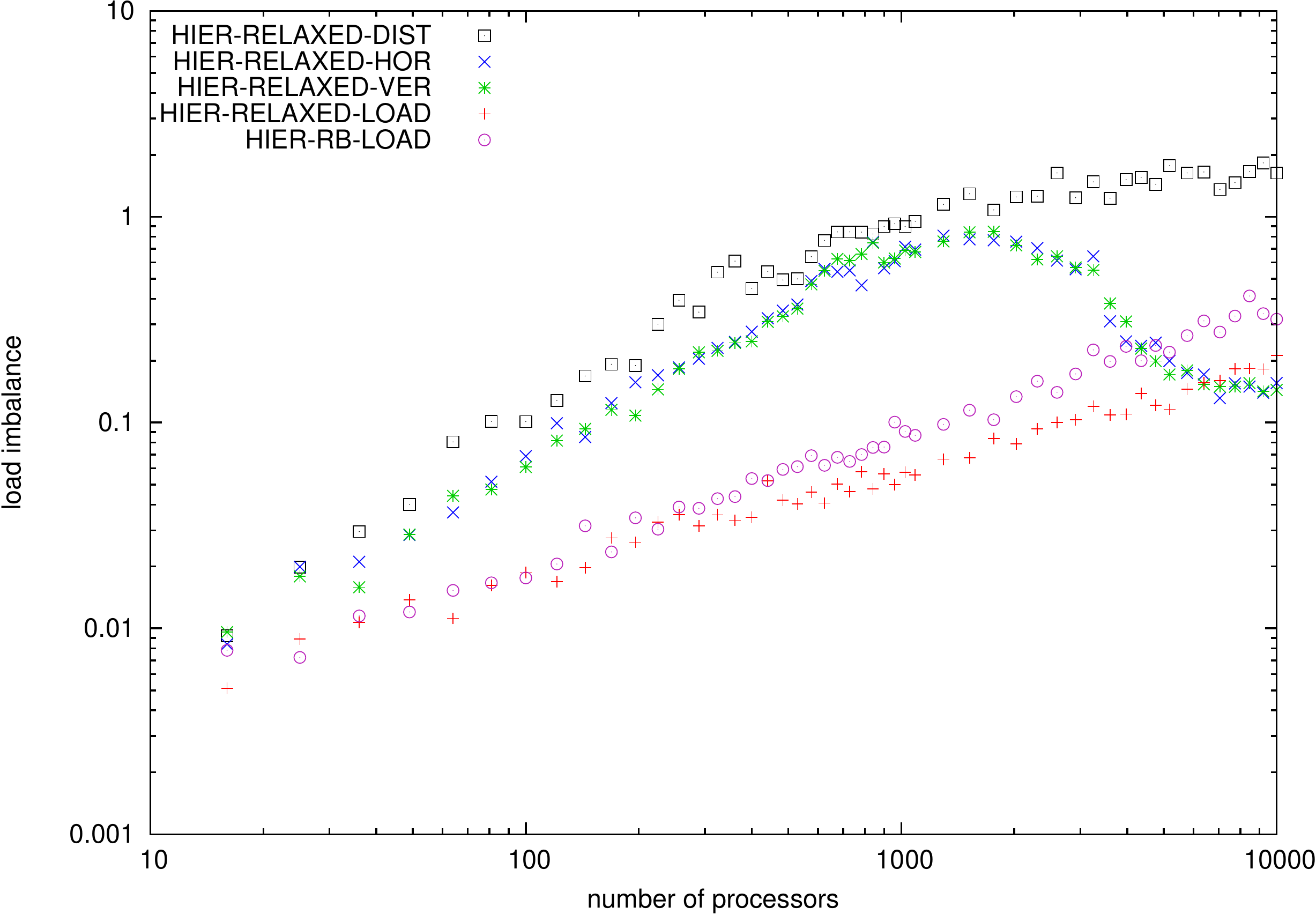}
  \caption{{\tt HIER-RELAXED} on 512x512 Multi-peak.}
  \label{fig:multi_peak_size512_RH}
\end{figure}

\begin{figure}
  \centering
  \includegraphics[width=.8\linewidth]{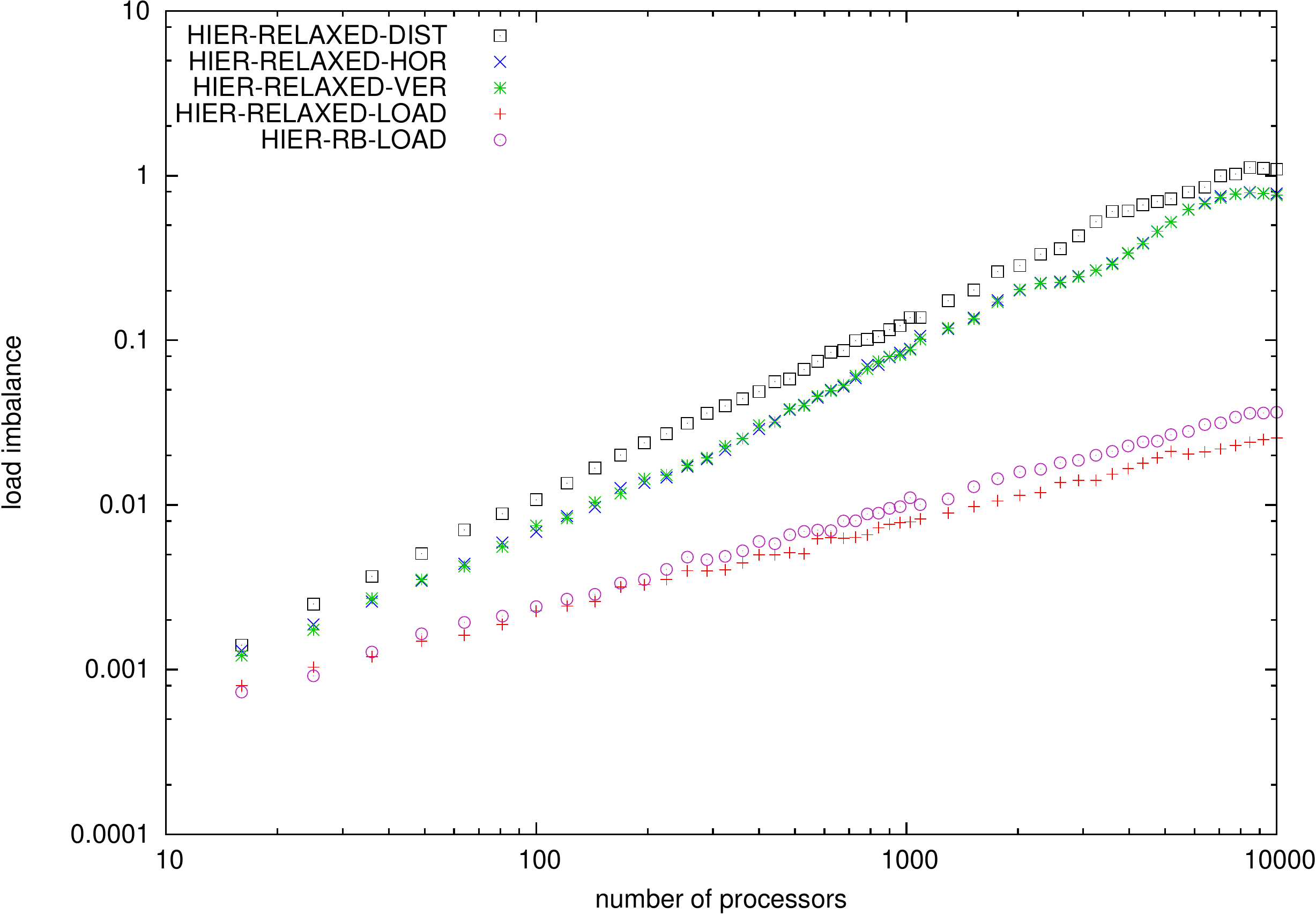}
  \caption{{\tt HIER-RELAXED} on 4096x4096 Diagonal.}
  \label{fig:diagonal_dist_size4096_recbisectminmax}
\end{figure}

Since the load variant of both algorithm leads to the best load
imbalance, we will refer to them as {\tt HIER-RB} and {\tt
HIER-RELAXED}.

We proposed in Section~\ref{sec:hierar_part}, {\tt HIER-OPT}, a
dynamic programming algorithm to compute the optimal hierarchical
bipartition. We did not implement {\tt HIER-OPT} since we expect it to
run in hours even on small instances. However, we derived {\tt
  HIER-RELAXED}, from the dynamic programming
formulation. Figure~\ref{fig:multi_peak_size512_RH}
and~\ref{fig:diagonal_dist_size4096_recbisectminmax} include the
performance of {\tt HIER-RB} and allow to compare it to {\tt
  HIER-RELAXED}. It is clear that {\tt HIER-RELAXED} leads to a better
load balance than {\tt HIER-RB} in these two cases. However, the
performance of {\tt HIER-RELAXED} might be very erratic when the
instance changes slightly. For instance, on
Figure~\ref{fig:homa3d_nbproc400_rebisectc} the performance of {\tt
  HIER-RELAXED} during the execution of the PIC-MAG application is
highly unstable.

\begin{figure}
  \centering
  \includegraphics[width=.8\linewidth]{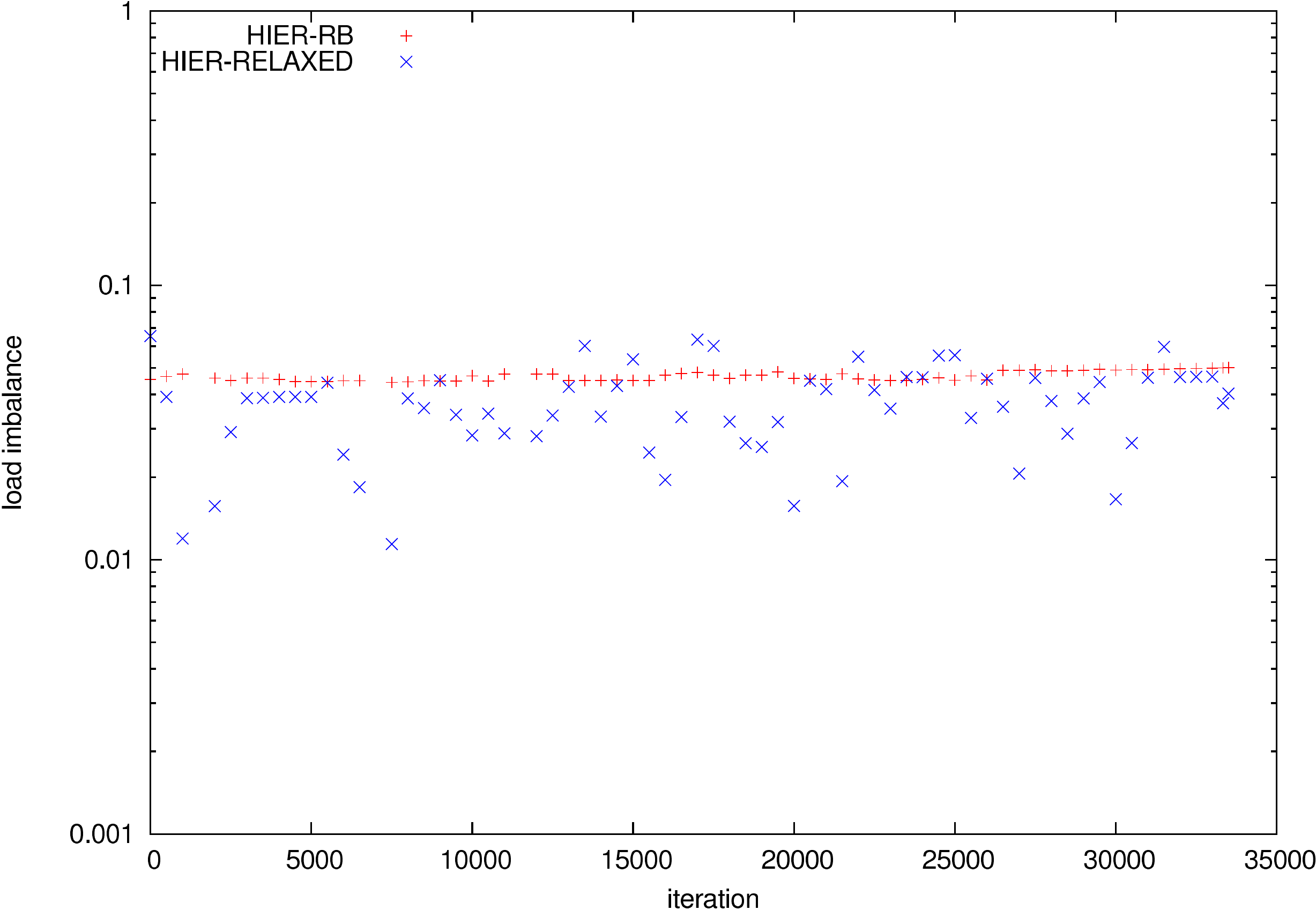}
  \caption{Hierarchical methods on PIC-MAG
    with $\nbproc = 400$.}
  \label{fig:homa3d_nbproc400_rebisectc}
\end{figure}

\subsection{Execution time}

In all optimization problems, the trade-off between the quality of a
solution and the time spent computing it appears. We
present in Figure~\ref{fig:uniform_ratio_rho1_2_time} the execution
time of the different algorithms on 512x512 Uniform instances with
$\Delta=1.2$ when the number of processors varies. The execution times
of the algorithms increase with the number of processors. 

All the heuristics complete in less than one second even on 10,000
processors. The fastest algorithm is obviously {\tt RECT-UNIFORM} since
it outputs trivial partitions. The second fastest algorithm is {\tt
HIER-RB} which computes a partition in 10,000 processors in 18
milliseconds. Then comes the {\tt JAG-PQ-HEUR} and {\tt JAG-M-HEUR}
heuristics which take about 106 milliseconds to compute a solution of
the same number of processors. Notice that the execution of {\tt
JAG-M-HEUR-PROBE} takes about twice longer than {\tt JAG-M-HEUR}. The
running time of {\tt RECT-NICOL} algorithm is more erratic (probably due
to the iterative refinement approach) and it took 448 milliseconds to
compute a partition in 10,000 rectangles. The slowest heuristic is {\tt
HIER-RELAXED} which requires 0.95 seconds of computation to compute a
solution for 10,000 processors.

Two algorithms are available to compute the optimal \pxq-way jagged
partition. Despite the various optimizations implemented in the dynamic
programming algorithm, {\tt JAG-PQ-OPT-DP} is about one order of
magnitude slower than {\tt JAG-PQ-OPT-NICOL}. {\tt JAG-PQ-OPT-DP} takes
63 seconds to compute the solution on 10,000 processors whereas {\tt
JAG-PQ-OPT-NICOL} only needs 9.6 seconds. Notice that using heuristic
algorithm {\tt JAG-PQ-HEUR} is two order of magnitude faster than {\tt
JAG-PQ-OPT-NICOL}, the fastest known optimal \pxq-way jagged algorithm.

The computation time of {\tt JAG-M-OPT} is not reported on the chart. We
never run this algorithm on a large number of processors since it already
took 31 minutes to compute a solution for 961 processors. The results on
different classes of instances are not reported, but show the same
trends. Experiments with larger load matrices show an increase in the
execution time of the algorithm. Running the algorithms on matrices of
size 8,192x8,192 basically increases the running times by an order of
magnitude.

\begin{figure}
  \centering
  \includegraphics[width=.8\linewidth]{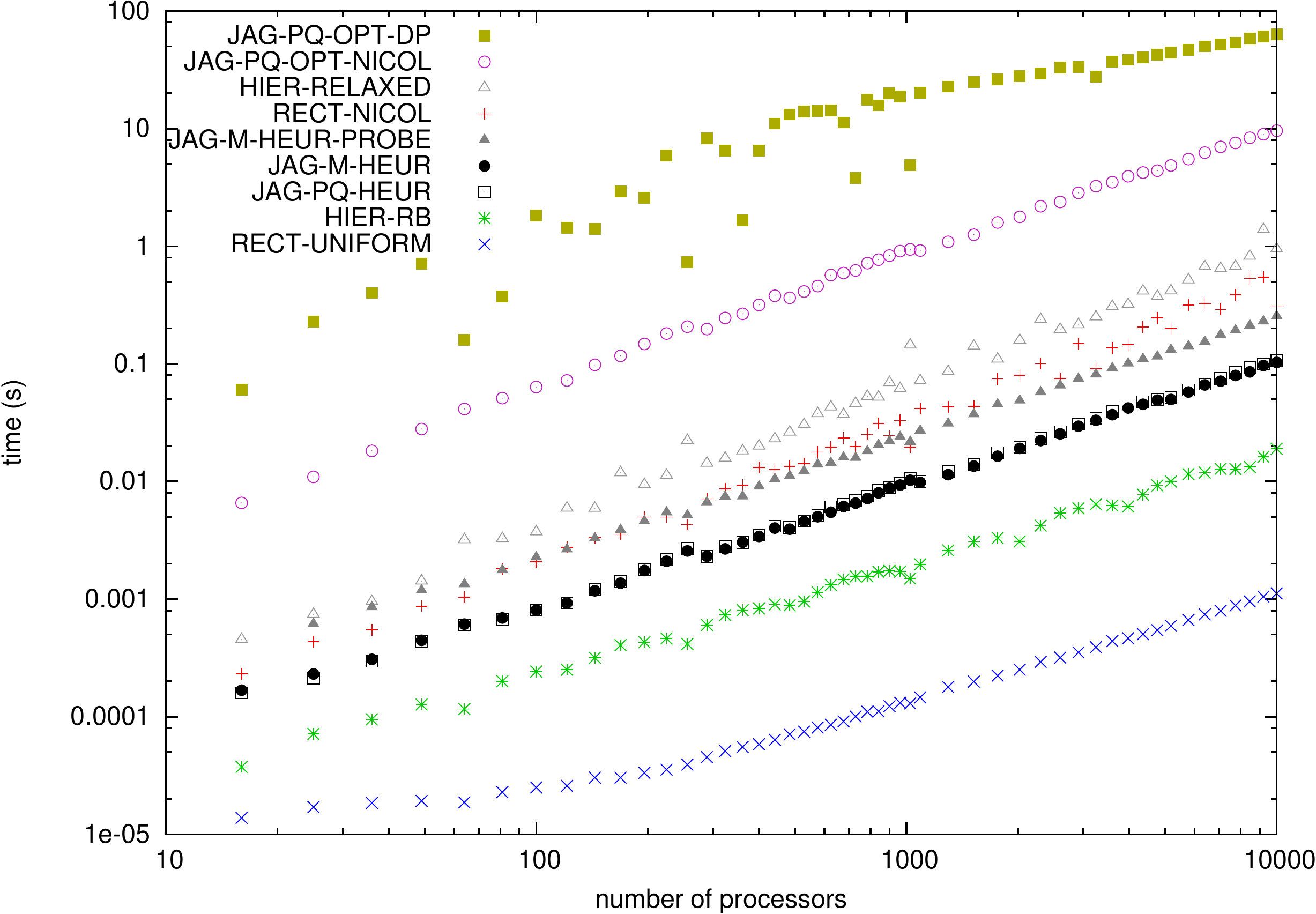}
  \caption{Runtime on 512x512 Uniform with $\Delta=1.2$.}
  \label{fig:uniform_ratio_rho1_2_time}
\end{figure}

Loading the data and computing the prefix sum array is required by all
two dimensional algorithms. Hence, the time taken by these operations
is not included in the presented timing results. For reference, it is
about 40 milliseconds on a 512x512 matrix.

\subsection{Which algorithm to choose?}

The main question remains. Which algorithm should be chosen to
optimize an application's performance?

From the algorithm we presented, we showed that \nbproc-way jagged
partitioning techniques provide better solutions than an optimal \pxq-way
jagged partition. It is therefore better than rectilinear partitions
as well. The computation of an optimal \nbproc-way jagged partition is
too slow to be used in a real system. It remains to decide between
{\tt JAG-M-HEUR-PROBE}, {\tt HIER-RB} and {\tt HIER-RELAXED}. As a
point of reference, the results presented in this section also include
the result of algorithm generating rectilinear partitioning, namely,
{\tt RECT-UNIFORM} and {\tt RECT-NICOL}.

\begin{figure}
  \centering
  \includegraphics[width=.8\linewidth]{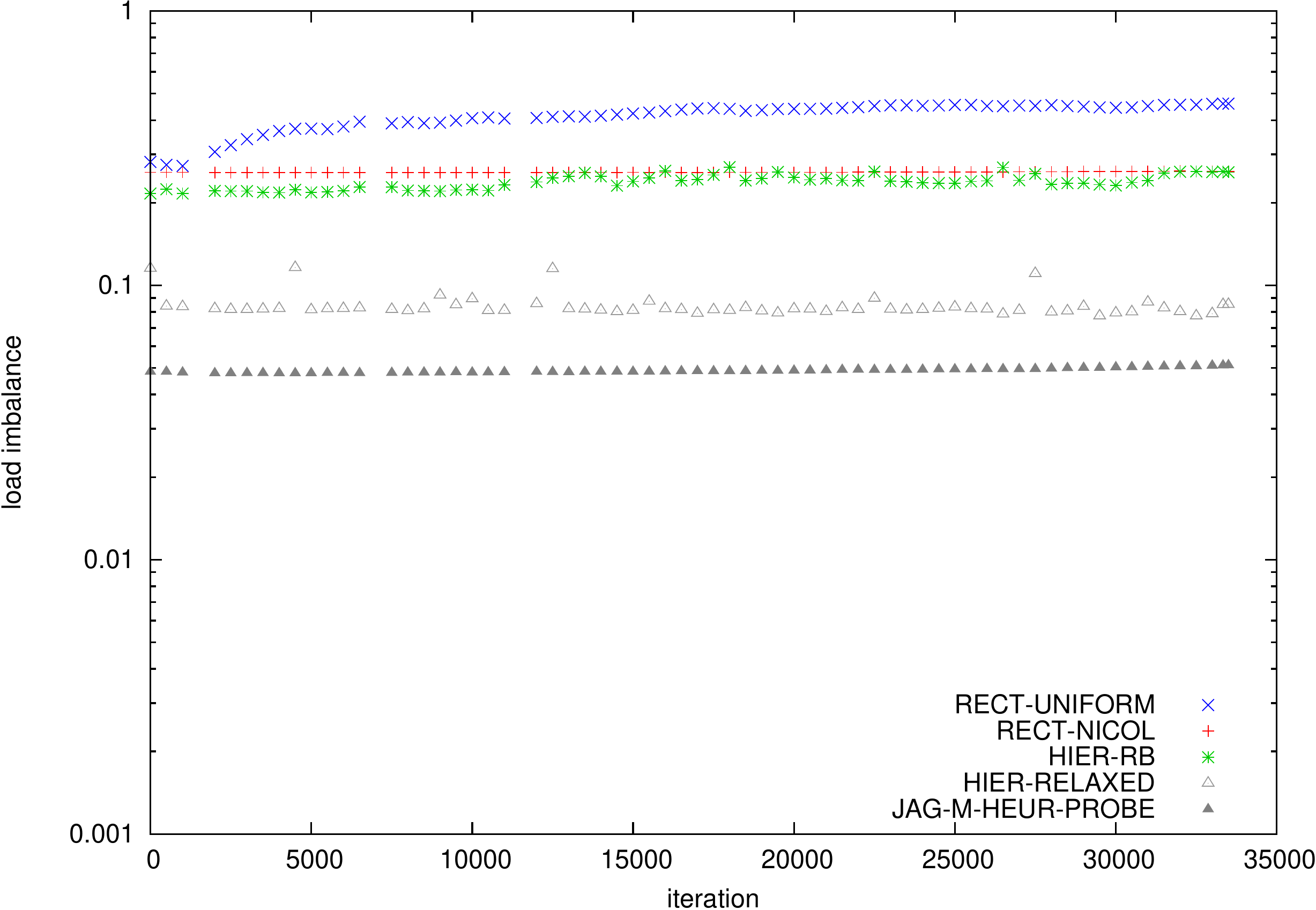}
  \caption{Main heuristics on PIC-MAG with $\nbproc = 9216$.}
  \label{fig:homa3d_proc9216_recap}
\end{figure}

Figure~\ref{fig:homa3d_proc9216_recap} shows the performance of the
PIC-MAG application on 9,216 processors. The {\tt RECT-UNIFORM}
partitioning algorithm is given as a reference. It achieves a load
imbalance that grows from 30\% to 45\%. {\tt RECT-NICOL} reaches a
constant 28\% imbalance over time. {\tt HIER-RB} is usually slightly
better and achieves a load imbalance that varies between 20\% and
30\%. {\tt HIER-RELAXED} achieves most of the time a much better load
imbalance, rarely over 10\% and typically between 8\% and 9\%. {\tt
  JAG-M-HEUR-PROBE} outperforms all the other algorithms by providing
a constant 5\% load imbalance.

Figure~\ref{fig:homa3d_it20000_recap} shows the performance of the
algorithms while varying the number of processors at iteration
20,000. The conclusions on {\tt RECT-UNIFORM}, {\tt RECT-NICOL} and
{\tt HIER-RB} stand. Depending on the number of processors, the
performance of {\tt JAG-M-HEUR-PROBE} varies and in general {\tt
  HIER-RELAXED} leads to the best performance, in this test.

\begin{figure}
  \centering
  \includegraphics[width=.8\linewidth]{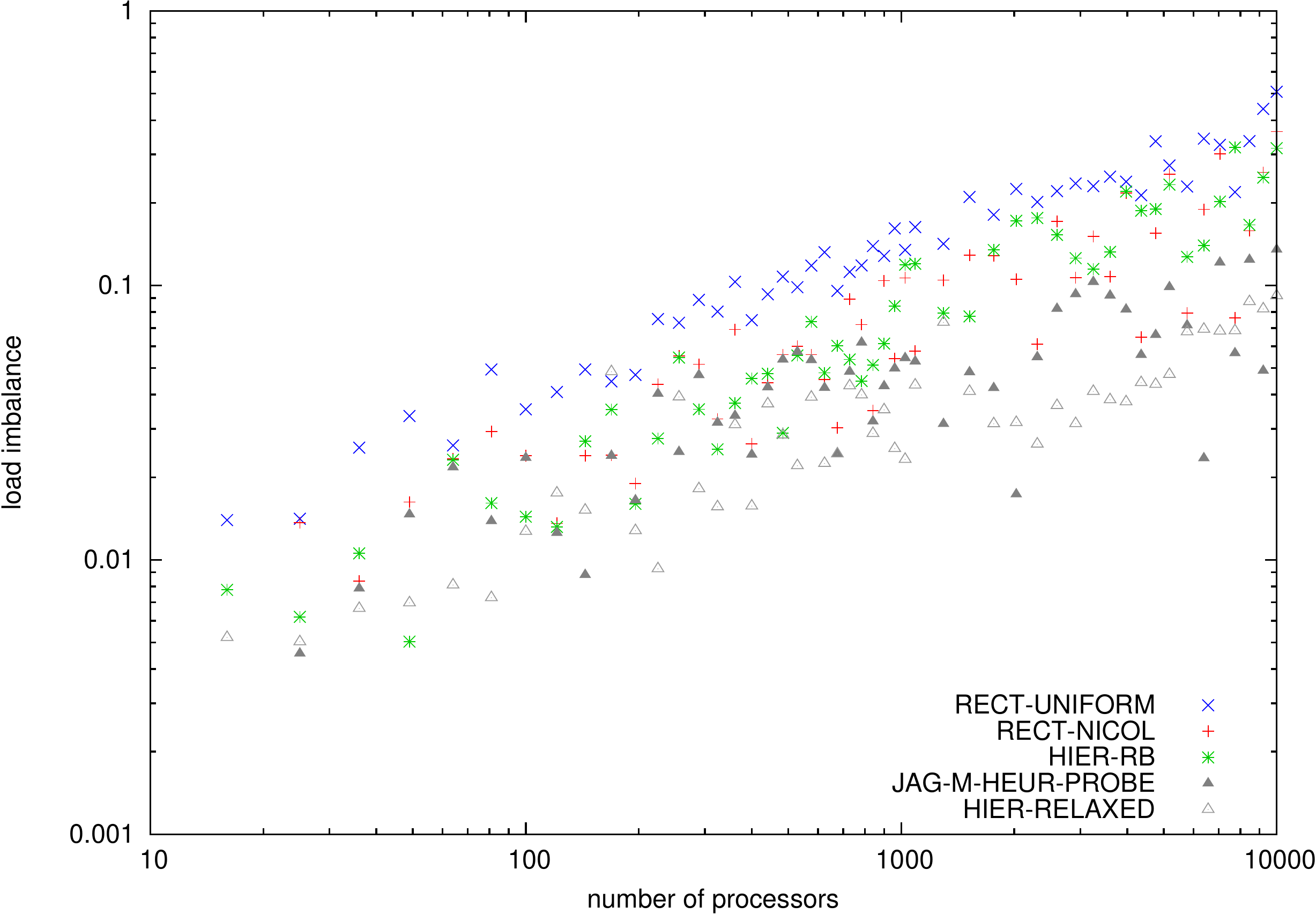}
  \caption{Main heuristics on PIC-MAG iter=20,000.}
  \label{fig:homa3d_it20000_recap}
\end{figure}

Figure~\ref{fig:SLAC6.0M_yz_recap} presents the performance of the
algorithms on the mesh based instance SLAC. Due to the sparsity of
the instance, most algorithms get a high load imbalance.  Only the
hierarchical partitioning algorithms manage to keep the imbalance low and
{\tt HIER-RELAXED} gets a lower imbalance than {\tt HIER-RB}.

\begin{figure}
  \centering
  \includegraphics[width=.8\linewidth]{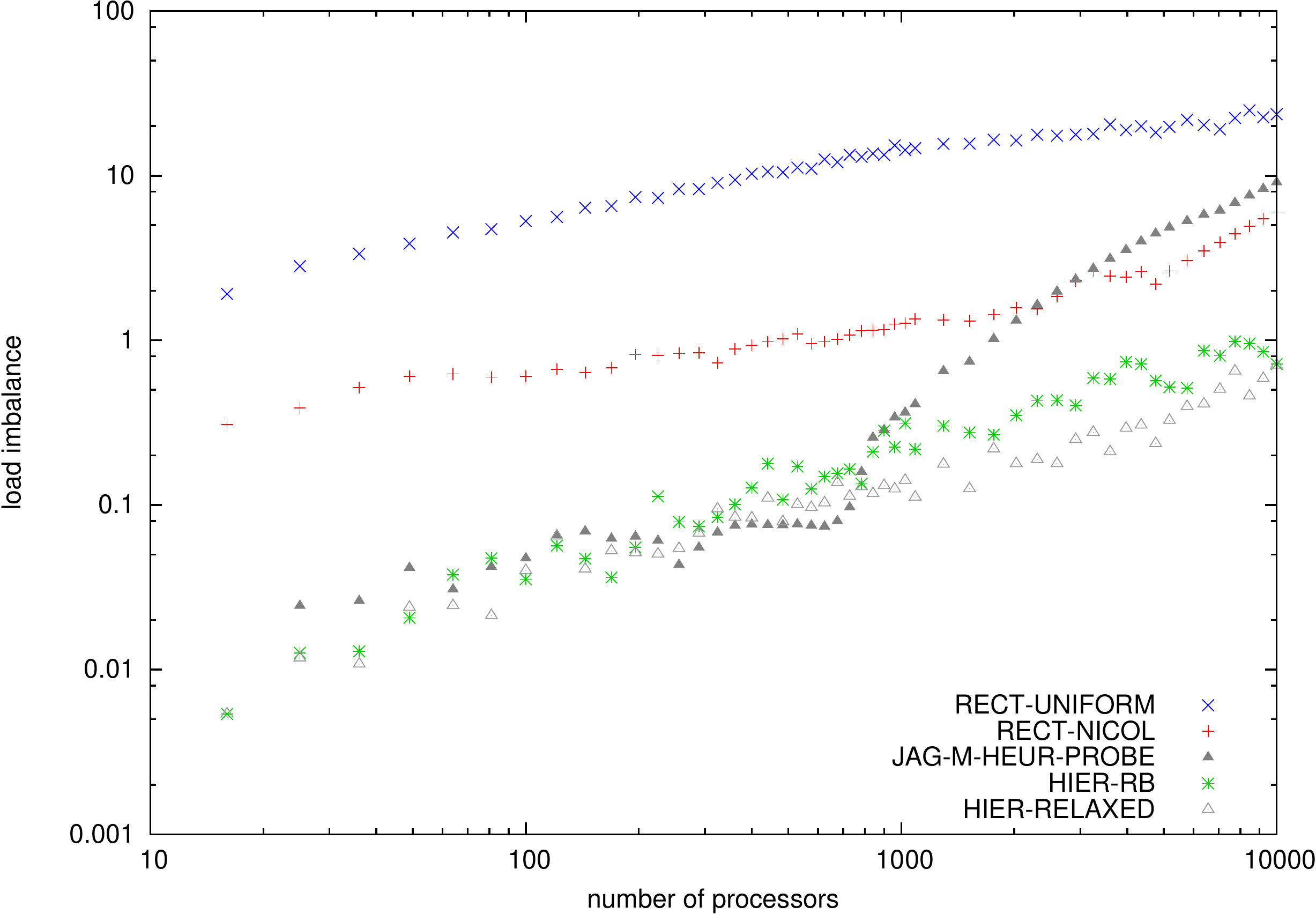}
  \caption{Main heuristics on SLAC.}
  \label{fig:SLAC6.0M_yz_recap}
\end{figure}

The results indicate that as it stands, the algorithms {\tt
  HIER-RELAXED} and {\tt JAG-M-HEUR-PROBE}, we proposed, are the one
to choose to get a good load balance. However, we believe a developer
should be cautious when using {\tt HIER-RELAXED} because of the
erratic behavior it showed in some experiments (see
Figure~\ref{fig:homa3d_nbproc400_rebisectc}) and because of its
not-that-low running time (up to one second on 10,000 processors
according to Figure~\ref{fig:uniform_ratio_rho1_2_time}). {\tt
  JAG-M-HEUR-PROBE} seems much a more stable heuristic. The bad load
balance it presents on Figure~\ref{fig:homa3d_it20000_recap} is due to
a badly chosen number of partitions in the first dimension.

\section{Hybrid partitioning scheme}
\label{sec:hybrid}

The previous sections show that we have on one hand, heuristics that
are good and fast, and on the other hand, optimal algorithms which are
even better but to slow to be used in most practical cases. This
section presents some engineering techniques one can use to obtain
better results than using only the heuristics while keeping the runtime of
the algorithms reasonable.

Provided, in general the maximum load of a partition is given by the most loaded
rectangle and not by the general structure of the partition, one idea
is to use the optimal algorithm to be locally efficient and leave the
general structure to a faster algorithm. We introduce the class of
{\tt HYBRID} algorithms which construct a solution in two phases. A
first algorithm will be used to partition the matrix $\loadmatrix$ in
$P$ parts. Then the parts will be independently partitioned with a
second algorithm to obtain a solution in $\nbproc$ parts. This section
investigates the hybrid algorithms and try to answer the following
questions: Which algorithms should be used at phase 1 and phase 2? In
how many parts the matrix should be divided in the first phase (i.e.,
what should $P$ be)?
How to allocate the $\nbproc$ processors between the $P$ parts? And
most importantly, is there any advantage in using hybrid algorithms?

Between the two phases, it is difficult to know how to allocate the
$\nbproc$ processors to the $P$ parts without doing a deep search. We
choose to allocate the parts proportionally according to the rule used
in {\tt JAG-M-HEUR}, i.e., each rectangle $r$ will first be allocated
$Q_r = \ceil{\frac{\load(r)}{\load(\loadmatrix)}(\nbproc-P)}$ parts. The
remaining processors are distributed greedily.

We conducted experiments using different PIC-MAG instances. All the
values of $P$ were tried between $2$ and $\frac{\nbproc}{2}$.
We will denote the {\tt HYBRID} algorithm using {\tt ALGO1} for phase
1 and {\tt ALGO2} for phase 2 as {\tt HYBRID(ALGO1/ALGO2)}.

The first round of experiments mainly showed three observations. (No
results are shown since similar results will be presented later.)
First, {\tt HYBRID} is too slow to use {\tt JAG-M-OPT} at the second
phase for studying the performance (e.g., partitioning
PIC-MAG at iteration 5000 on $1024$ processors using $P=17$ takes $78$
seconds). Second, the performance shows "waves" when $P$ varies which
are correlated with the values of
$\ceil{\frac{\nbproc-P}{P}}$. Finally {\tt HYBRID} can obtain load
imbalances better than {\tt JAG-M-HEUR} and {\tt HIER-RELAXED} on some
configuration confirming that {\tt HYBRID} might be useful.

To make the algorithm faster, we introduce the notion of {\em fast} and
{\em slow} algorithms at phase 2. The {\em fast} algorithm is first run
on each part and the parts are sorted according to their maximum
load. The {\em slow} algorithm is run on the part of higher maximum
load. If the solution returned by the {\em slow} algorithm improves the
maximum load of that part, the solution is kept and the parts are sorted
again. Otherwise, the algorithm terminates. This modification increased
the speed of the algorithm up to an order of magnitude. (Using {\tt
JAG-M-HEUR-PROBE} as the {\em fast} algorithm in phase 2 allows to run
PIC-MAG at iteration 5000 on $1024$ processors using $P=17$ in $38$
seconds, halving the computation time.) Detailed
timing on PIC-MAG at iteration 5000 using $1024$ processors can be found in
Figure~\ref{fig:hybrid-runtime-fast}. The {\tt
HYBRID(JAG-M-HEUR/JAG-M-OPT)} curve is the original implementation of
{\tt HYBRID} using {\tt JAG-M-HEUR} at phase 1 and {\tt JAG-M-OPT} at
phase 2. The {\tt HYBRID-F(JAG-M-HEUR/JAG-M-OPT)} curve presents the
timing obtained with the use of {\em fast} and {\em slow} algorithm. The
{\tt HYBRID} algorithm using {\tt JAG-M-OPT} at both phase is given as a
point of reference. All the implementation used {\tt JAG-M-HEUR-PROBE} as the
{\em fast} algorithm. Figure~\ref{fig:hybrid-runtime-fast} shows that
using {\em fast} and {\em slow} algorithms makes the computation about
one order of magnitude faster. Using {\tt JAG-M-OPT} at phase 1 is
typically orders of magnitude slower than using another algorithm.

\begin{figure}[htb]
  \centering
  \includegraphics[width=.8\linewidth]{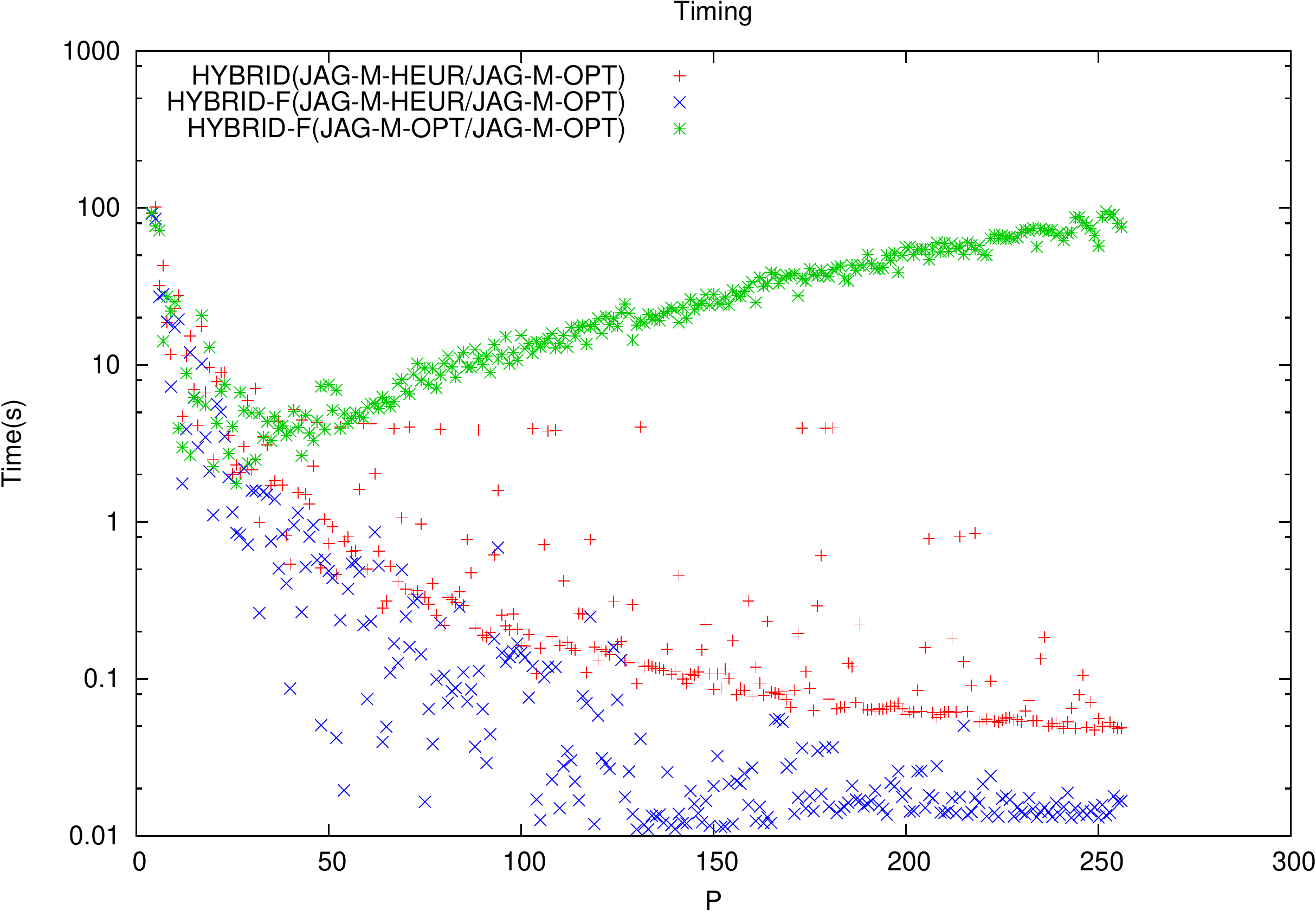}
  \caption{Runtime of {\tt HYBRID} methods on PIC-MAG iter=5000 with $\nbproc=512$.}
  \label{fig:hybrid-runtime-fast}
\end{figure}

This improvement allows to run more complete and detailed
experiments. In particular, using {\tt JAG-M-OPT} at phase 2 runs
quickly. This allow us to study the performance of {\tt HYBRID} using
an algorithm that get good load balance at phase 2. The load imbalance
obtained on PIC-MAG 5000 on $512$ are shown in
Figure~\ref{fig:hybrid-optl2}. Two {\tt HYBRID} variants that use {\tt
  JAG-M-OPT} or {\tt HIER-RELAXED} at phase 1 are
presented. For reference, three horizontal lines present the
performance obtained by {\tt JAG-M-HEUR}, {\tt HIER-RELAXED}
and {\tt JAG-M-OPT} on that instance. A first remark is that a
large number of configurations lead to load imbalances better than {\tt
  JAG-M-HEUR}. A significant number of them get load imbalances
better than {\tt HIER-RELAXED} and sometimes comparable to the
performance of {\tt JAG-M-OPT}. Then, the load imbalance
significantly varies with $P$: it is decreasing by interval which
happen to be synchronized with the values of
$\ceil{\frac{\nbproc-P}{P}}$. Finally, the load imbalance is better
when the values of $P$ are low. This behavior was predictable since
the lower $P$ is the more global the optimization is. However, one
should notice that some low load imbalances are found with high values
of $P$.

\begin{figure}[htb]
  \centering
  \includegraphics[width=.8\linewidth]{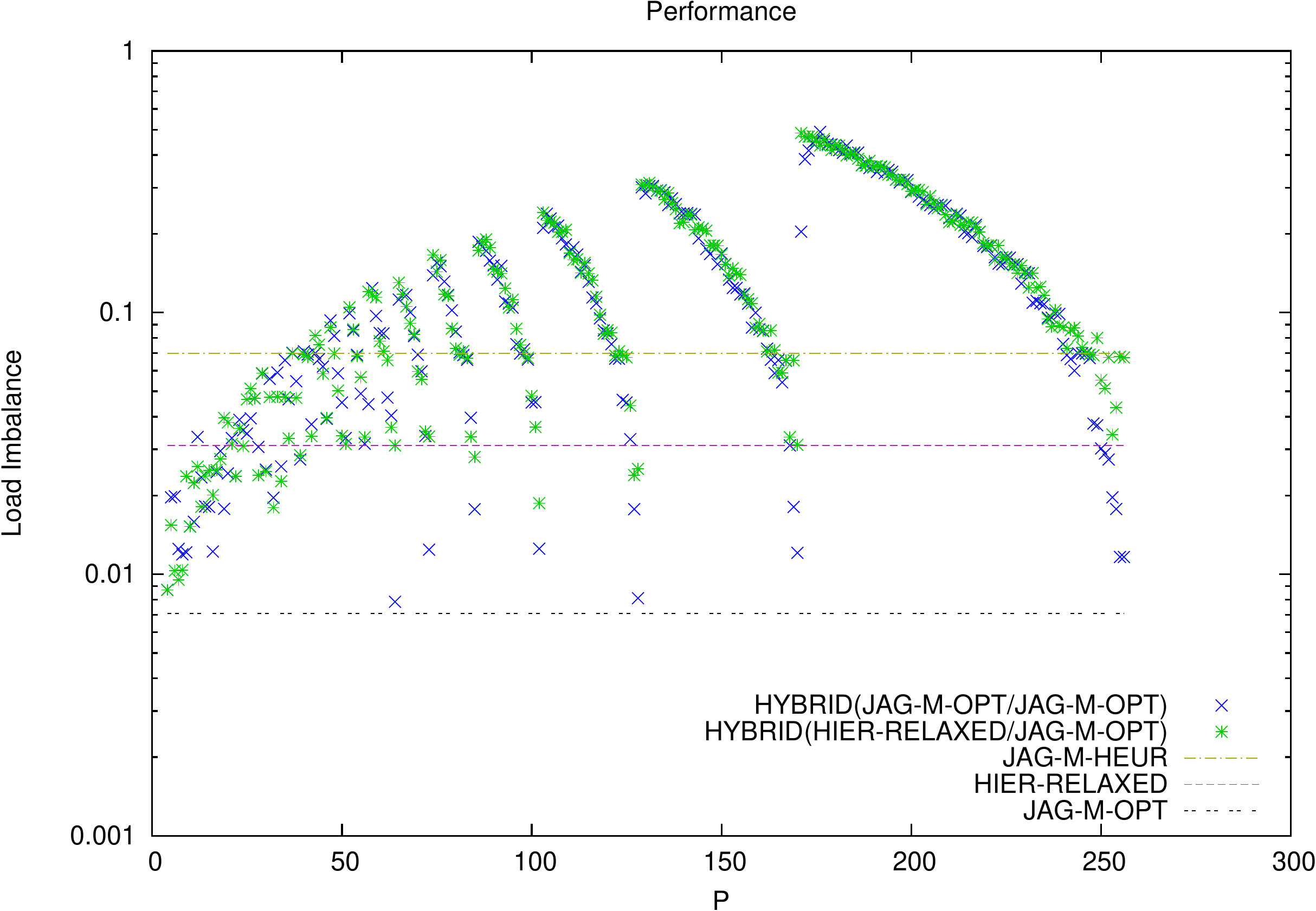}
  \caption{Using {\tt JAG-M-OPT} at phase 2 on PIC-MAG iter=5000 with $\nbproc = 512$}
  \label{fig:hybrid-optl2}
\end{figure}

Good load imbalance could obviously be obtained by trying every single
value of $P$. However, such a procedure is likely to take a lot of
time. Provided the phase 2 algorithm takes a large part of the
computation time, it will be interesting to predict the performance of
the second phase without having to run it. We define the expected load
imbalance as the load imbalance that would be obtained provided the
second phase balances the load optimally, i.e., $eLI =
\max_r \frac{\load(r)}{Q_r}$. Figure~\ref{fig:hybrid-correlation}
presents for each solution its expected load imbalance and the load
imbalance obtained once the second phase is run for different values
of $P$. The solutions are presented for two hybrid variants, one using
{\tt JAG-M-HEUR} at phase 2 and the other one using {\tt
  JAG-M-OPT}. For similar expected load imbalance, the load
imbalance obtained using {\tt JAG-M-HEUR} are spread over an
order of magnitude. However, the load imbalance obtained by {\tt
  JAG-M-OPT} are much more focused. The expected load imbalance
and obtained load imbalance are well correlated when {\tt
  JAG-M-OPT} is used at phase 2.

\begin{figure}[htb]
  \centering
  \includegraphics[width=.8\linewidth]{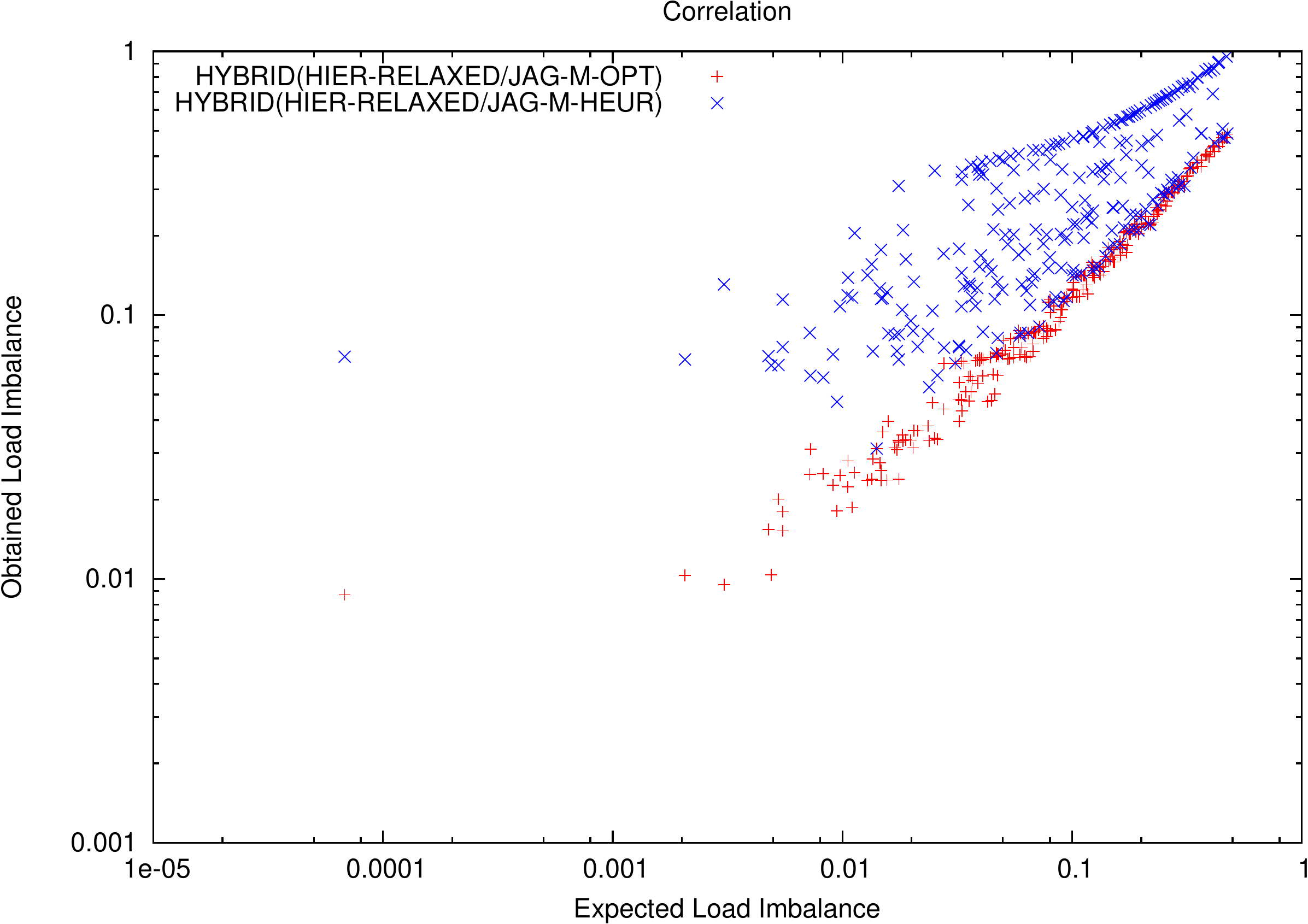}
  \caption{Correlation between expected load imbalance at the end of
  phase 1 and the obtained load imbalance on PIC-MAG iter=5000 with
  $\nbproc = 512$} \label{fig:hybrid-correlation}
\end{figure}

The previous experiments show two things. The actual performance are
correlated with expected performance at the end of phase 1 if {\tt
JAG-M-OPT} is used in phase 2. The load imbalance decreases in an
interval of values of $P$ synchronized with the values of
$\ceil{\frac{\nbproc-P}{P}}$. Therefore, we propose to enumerate the
values of $P$ at the end of such intervals. For each of these value, the
phase 1 algorithm is used and the expected load imbalance is
computed. The phase 2 is only applied on the value of $P$ leading to the
best expected load imbalance. Obviously the best expected load imbalance
will be given by the non-hybrid case $P=1$, but it will lead to a high
runtime. The tradeoff between the runtime of the algorithm and the
quality of the solution should be left to the user by specifying a
minimal $P$.

Figure \ref{fig:hybrid-PICMAG-it10000} presents the load imbalance
obtained on the PIC-MAG datasets at iteration 10000 using
$\sqrt{\nbproc}$ as minimal $P$. The {\tt HYBRID} algorithm obtains a
load balance usually better than {\tt JAG-M-HEUR-PROBE} and often
better than {\tt HIER-RELAXED}. The algorithm leading to the best load
imbalance seems to depend on the number of processors. For instance,
Figure~\ref{fig:hybrid-PICMAG-m7744} shows the load imbalance of the
algorithms on 7744 processors. {\tt JAG-M-HEUR-PROBE} leads constantly
to better results than {\tt HIER-RELAXED}. And {\tt HYBRID} typically
improve both by a few percents.

\begin{figure}[htb]
  \centering
  \includegraphics[width=.8\linewidth]{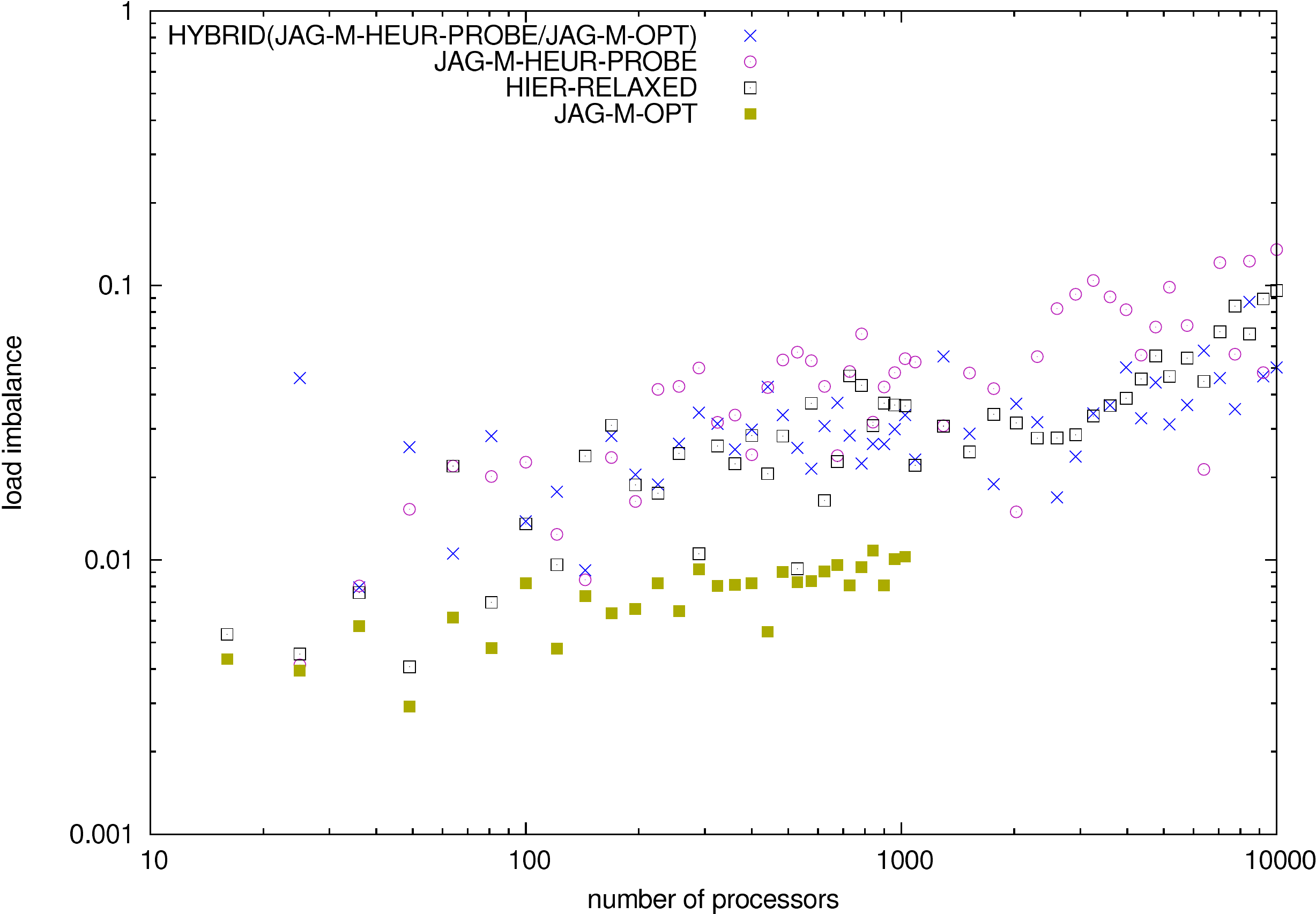}
  \caption{{\tt HYBRID} algorithm on PIC-MAG iter=10000}
  \label{fig:hybrid-PICMAG-it10000}
\end{figure}

\begin{figure}[htb]
  \centering
  \includegraphics[width=.8\linewidth]{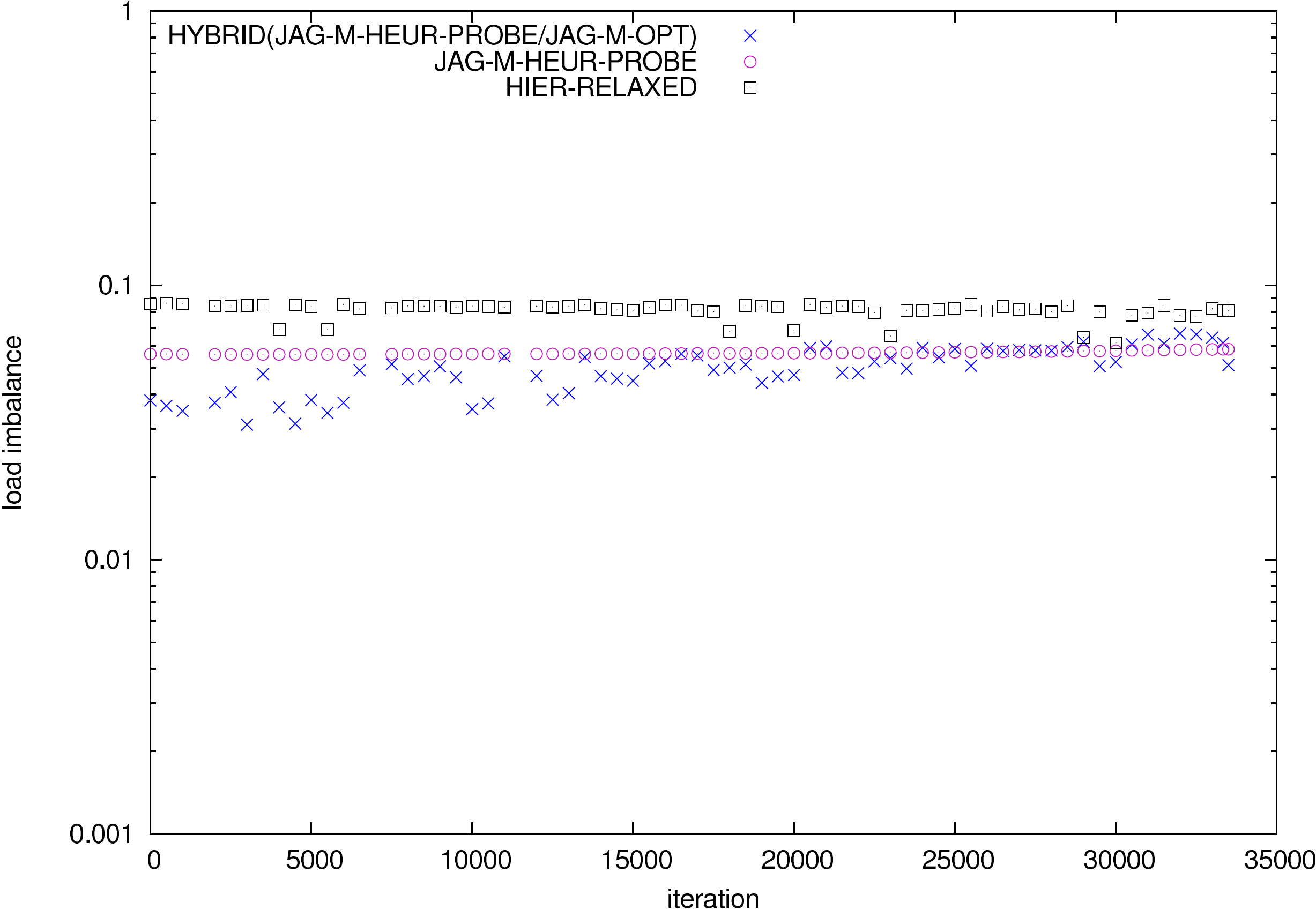}
  \caption{{\tt HYBRID} algorithm on PIC-MAG on 7744 processors}
  \label{fig:hybrid-PICMAG-m7744}
\end{figure}

However, on 6400 processors (Figure~\ref{fig:hybrid-PICMAG-m6400}),
{\tt HYBRID} almost constantly improves the result of {\tt
  HIER-RELAXED} by a few percents but does not achieve better load
imbalance than {\tt
  JAG-M-HEUR-PROBE}. Figure~\ref{fig:homa3d_proc6400_jagged} showed that
{\tt JAG-M-HEUR} is significantly outperformed by {\tt
  JAG-M-HEUR-PROBE} in that configuration. Recall that the main
difference between these heuristics is that the former distributes the
processors among the stripes only based on the load of each stripe
while the latter use the minimum number of processors per stripe to
obtain the minimum load balance. The same idea could be applied to
{\tt HYBRID} algorithms looking for the minimum number of processors to
allocate to each part without degrading the load imbalance and use
these processors on the parts that lead to the maximum load. This
modification will improve the load imbalance but will also increase
the running time significantly.

\begin{figure}[htb]
  \centering
  \includegraphics[width=.8\linewidth]{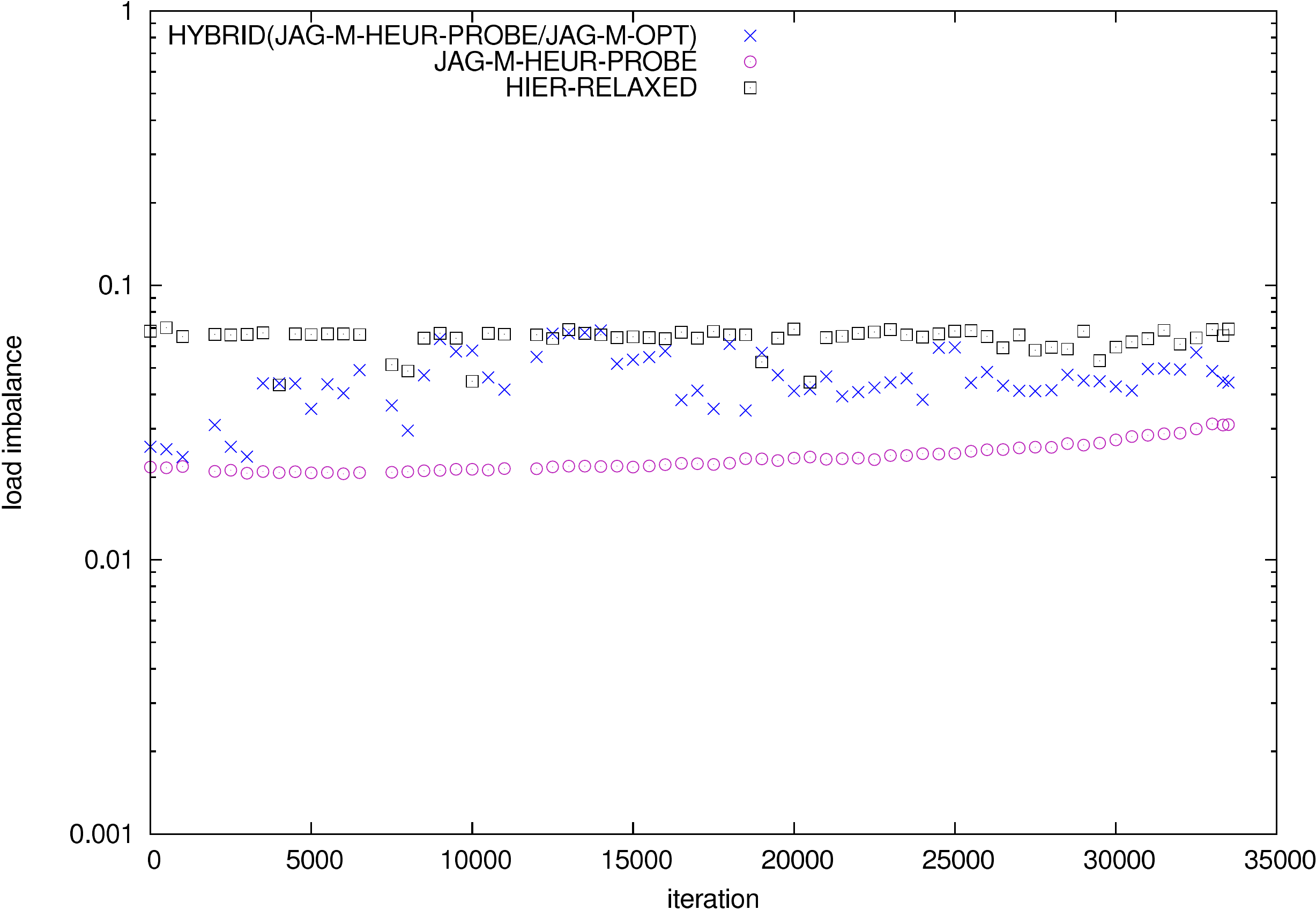}
  \vspace*{-2ex}
  \caption{{\tt HYBRID} algorithm on PIC-MAG on 6400 processors}
  \label{fig:hybrid-PICMAG-m6400}
\end{figure}

\begin{figure}[htb]
  \centering 
  \includegraphics[width=.8\linewidth]{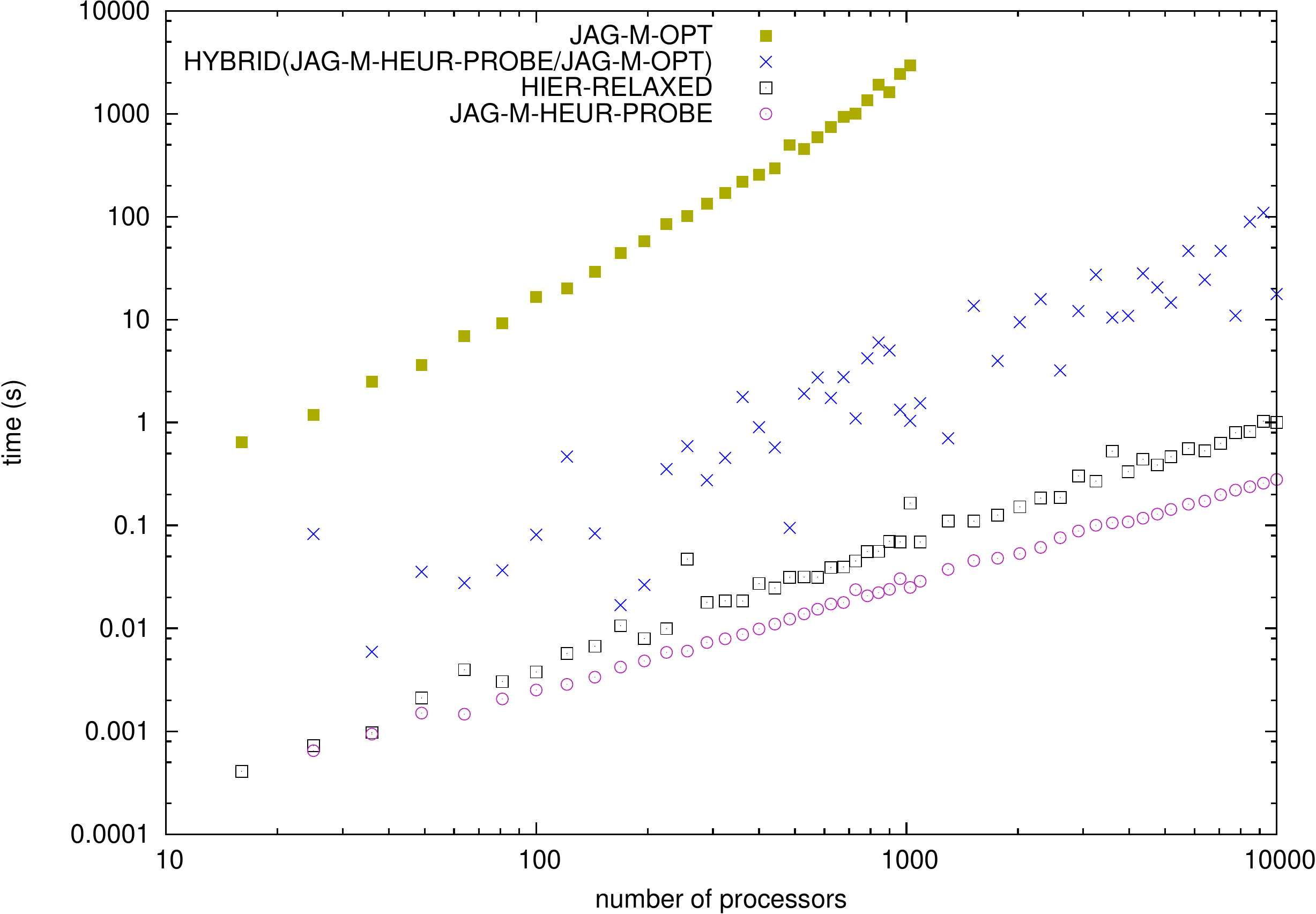}
  \vspace*{-2ex}
  \caption{Runtime of {\tt HYBRID} methods on PIC-MAG iter=10000}
  \label{fig:hybrid-runtime}
\end{figure}

The runtime of the algorithm is presented in
Figure~\ref{fig:hybrid-runtime}. It shows that the {\tt HYBRID}
algorithm is two or three orders of magnitude slower than the
heuristics but one to two orders of magnitude faster than {\tt
  JAG-M-OPT}. However, {\tt HYBRID} algorithms are likely to
parallelize pleasantly.

Some more engineering techniques could be applied to {\tt
HYBRID}. Different time/quality tradeoff could be obtained by stopping
the use of the {\em slow} algorithm in phase 2 when the improvement
become smaller than a given threshold. Using a 3-phase {\tt HYBRID}
mechanism could be another way of obtaining different trade-offs. 

{\tt HYBRID} is not the only option for algorithm engineering. One idea
that might lead to interesting time/quality tradeoff would be to avoid 
running dynamic programming algorithms all the way through. 
Early termination can be decided based on a time allocation or a targeted
maximum load.

Finally, different kind of iterative improvement algorithms could be
designed. For instance, on \nbproc-way jagged partition, {\tt
  JAG-M-PROBE} provides the optimal number of processors to use in
each stripe provided the partition in the main dimension, and {\tt
  JAG-M-ALLOC} provides the optimal partition in the main dimension
provided the number of processors allocated to each stripe. Applying
{\tt JAG-M-PROBE} and {\tt JAG-M-ALLOC} the one after the other as
long as the solution improves would be one interesting iterative algorithm.

\section{Conclusion}
\label{sec:ccl}

Partitioning spatially localized computations evenly among processors
is a key step in obtaining good performance in a large class of
parallel applications. In this work, we focused on partitioning a
matrix of non-negative integers using rectangular partitions to obtain
a good load balance. We introduced the new class of solutions called
\nbproc-way jagged partitions, designed polynomial optimal algorithms
and heuristics for \nbproc-way partitions. Using theoretical worst
case performance analyses and simulations based on logs of two real
applications and synthetic data, we showed that the {\tt
  JAG-M-HEUR-PROBE} and {\tt HIER-RELAXED} heuristics we proposed get
significantly better load balances than existing algorithms while
running in less than a second. We showed how {\tt HYBRID} algorithms
can be engineered to achieve better load balance but use significantly
more computing time. Finally, if computing time is not really a
limitation, one can use more complex algorithm such that {\tt
  JAG-M-OPT}.

Showing that the optimal solution for \nbproc-way jagged
partitions, hierarchical bipartitions and hierarchical $k$-partitions
with constant $k$ can be computed in polynomial time is a strong
theoretical result. However, the runtime complexity of the proposed
dynamic programming algorithm remains high. Reducing the polynomial
order of these algorithms will certainly be of practical interest.

We only considered computations located in a two dimensional field but
some applications, such as PIC-MAG and SLAC, might expose three or
more dimensions. A simple way of dealing with higher dimension would
be to project the space in two dimensions and using a two dimensional
partitioning algorithm, as we have done in some of the
applications. But this choice is likely to be suboptimal since it
drastically restrict the set of possible allocations. An alternative
would be to extend the classes of partitions and algorithm to higher
dimension. For instance, a jagged partitioning algorithm would
partition the space along one dimension and perform a projection to
obtain planes which will be partitioned in stripes and projected to
one dimensional arrays partitioned in intervals. All the presented
algorithms extend in more than two dimensions, therefore the problems
will stay in the same complexity class. However, the guaranteed
approximation is likely to worsen, the time complexity is likely to
increase (especially for dynamic programming based algorithms). Memory
occupation is also likely to become an issue and providing cache
efficient algorithm should be investigated. However, the increase of
the size of the solution space will provide better load balance than
partitioning the two dimensional projection.

We are also planning to integrate the proposed algorithms in a
distributed particle in cell simulation code. To optimize the
application performance, we will need to take into account
communication into account while partitioning the task. Dynamic
application will require rebalancing and the partitioning algorithm
should take into account data migration cost. Finally, to keep the
rebalancing time as low as possible, it might useful not to gather the
load information on one machine but to perform the repartitioning
using a distributed algorithm.

\section*{Acknowledgment}

We thank to Y. Omelchenko and H. Karimabadi for providing us with the
PIC-MAG data; and R. Lee, M. Shephard, and X. Luo for the SLAC
data.

\clearpage

\bibliographystyle{abbrv}
\bibliography{techreport}

\end{document}